\def\zeit{\number\shorthour:\ifnum\shortminute<10 0\number\shortminute
	\else\number\shortminute\fi}
\newenvironment{proof}{\noindent{\bf Proof : \ }}{\hfill$\Box$\par\medskip}
\newtheorem{theorem}{Theorem}
\newtheorem{corollary}[theorem]{Corollary}
\newtheorem{lemma}[theorem]{Lemma}
\newenvironment{proofof}[1]{\begin{trivlist} \item {\bf Proof
			#1:~~}}
	{\qed\end{trivlist}}
\newcommand{\COMMENTED}[1]{{}}
\newcommand{\alg}{{\sf Alg}}
\newcommand{\prob}[1]{\operatorname{Pr}\left[#1\right]}
\newcommand{\ex}[1]{\operatorname{E}\left[#1\right]}
\newcommand{\E}{\ensuremath{\mathbf{E}}}
\renewcommand{\i}{\beta}
\renewcommand{\H}{H_{\i}}
\newcommand{\Ha}{H^{\alpha}}
\renewcommand{\S}{{\ensuremath{\mathbb{S}}}}
\newcommand{\optm}{{\ensuremath{\sf{opt}_M}}}
\newcommand{\optd}{{\ensuremath{\sf{opt}_D}}}
\newcommand{\optc}{{\ensuremath{\sf{opt}_C}}}
\newcommand{\eps}{{\ensuremath{\varepsilon}}}
\newcommand{\eat}[1]{}
\title{Metric Sublinear Algorithms via Linear Sampling }
\author{
	Hossein Esfandiari\thanks{Supported in part by NSF grants CCF-1320231 and CNS-1228598.}\\Harvard University \\ Cambridge,
	MA 
	\and
	Michael Mitzenmacher\thanks{Supported in part by  NSF grants CCF-1563710, CCF-1535795, CCF-1320231, and CNS-1228598.  Part of this work was done while visiting Microsoft Research New England.}\\Harvard University \\ Cambridge,
	MA 
}
\date{}
\begin{document}\sloppy 
\maketitle
% \section{Abstract}
\begin{abstract}
In this work we provide a new technique to design fast approximation algorithms for
graph problems where the points of the graph lie in a metric space.  
Specifically, we present a sampling approach for such metric graphs that, using a sublinear number of edge weight queries, provides a {\em linear sampling}, where each
edge is (roughly speaking) sampled proportionally to its weight.

For several natural problems, such as densest subgraph and max cut among others, we show that by sparsifying the graph using this sampling process, we can run
a suitable approximation algorithm on the sparsified graph and the result remains a good approximation for the original problem.
Our results have several interesting implications, such as providing the first sublinear time approximation algorithm for densest subgraph in a metric space, and improving the running time of estimating the average distance. 
\end{abstract}

\begin{comment}
\begin{abstract}
We provide sublinear approximation algorithms for several natural
graph problems where the points of the graph lie in a metric space;
that is, the algorithms take $o(n^2)$ time.  Our main tool is a
sampling approach for such metric graphs that, using a sublinear number of
edge weight queries, provides a {\em linear sampling}, where each
edge is (roughly speaking) sampled proportionally to its weight.
By sparsifying the graph using this sampling process, we can run
a suitable approximation algorithm on the sparsified graph, and show
that the result remains a good approximation for the original problem.
We provide sublinear time approximation algorithms for densest subgraph 
and max cut on metric graphs, as well as for other related problems.
\end{abstract}
\end{comment}

\section{Introduction}
In this paper, we aim to design approximation algorithms for several natural graph problems, in the setting where the points in the graph lie in a metric
space.  
Following the seminal work of \cite{indyk1999sublinear}, we aim to provide {\em sublinear} approximation algorithms; that is, on problems with $n$
points and hence ${n \choose 2}$ edge distances, we aim to provide
randomized algorithms that require $o(n^2)$ time and in fact only consider
$o(n^2)$ edges, by making use of sampling. Similar to the previous work, we assume we can query the weight of any single edge in $O(1)$ time;
when we use the term ``query'', we mean an edge weight query throughout.  

A well known technique to design sublinear algorithms is uniform sampling;  that is, a subset of edges (or vertices) is sampled uniformly at random. Several algorithms use uniform sampling to improve speed, space, or the number of queries~\cite{ahn2012analyzing,ahn2012graph,ahn2013spectral,alon2003testing,bateni2016almost,bateni2016distributed,bhattacharya2015space,czumaj2004sublinear,esfandiari2015applications,mcgregor2015densest}. Uniform sampling is very easy to implement, but problematically it is oblivious to the edge weights. When it comes to maximization problems on graphs, a few high weight edges may have a large effect on the solution, and hence the uniform sampling technique may fail to provide a suitable solution because it fails to sample these edges. For example, consider the densest subgraph problem, where the density of a subgraph is the sum of the edges weights divided by the number of vertices.  It is known that for general unweighted graphs, the densest subgraph of a uniformly sampled subgraph with $\tilde{O}(\frac n {\epsilon^2})$ edges is a $1-\epsilon$ approximation of the densest subgraph of the original graph \cite{esfandiari2015applications,mcgregor2015densest,mitzenmacher2015scalable}.  However, as we show in Appendix~\ref{app:dens} this result is not true for weighted graphs, even in a metric space.
This problem suggests we should design approaches that sample edges with probabilities proportional to (or otherwise related to) their weight in a metric space.

As our main result, we design a novel sampling approach using a sublinear number of queries for graphs in a metric space,
where independently for each edge, the probability the edge is in the sample is
proportional to its weight; we call such a sampling a {\em linear sampling}. Specifically, for a fixed factor $\alpha$, we can ensure
for an edge $e$ with weight $w_e$, if $\alpha w_e \leq 1$
then the edge appears in the sample with weight 1 with probability $\alpha
w_e$, and if $\alpha w_e > 1$, then the edge
is in the sample with weight $\alpha w_e$.  Hence the edge weights are ``downsampled'' by a factor of $\alpha$,
in a natural way.  
% but for edges with downsampled weight less than $1$, their expected weight in the
% sample graph is reduced to $\alpha w_e$.  
% 
% 
% if $\alpha w_e > 1$ then the edge
% is in the sample with weight $\alpha w_e$, and if $\alpha w_e \leq 1$
% then the edge is in the sample with weight 1 with probability $\alpha
% w_e$.  Hence the edge weights are ``downsampled'' by factor of $\alpha$,
% but for edges with downsampled weight less than $1$, their expected weight in the
% sample graph is reduced to $\alpha w_e$.  
We can choose an $\alpha$ to suitably sparsify our sample,
graph, run an approximation algorithm on that sample, and use that
result to obtain a corresponding, nearly-as-good approximation to the
original problem. Interestingly, we only query $\tilde{O}(n+\i)$\footnote{$\tilde{O}(\cdot)$ notation hides logarithmic factors.} edge weights to provide the sample, where $\i$ is ``almost'' the expected weight of the edges in the sampled graph. (See Subsection \ref{subsec:results} for a formal definition). Our algorithm to construct the sample also runs in $\tilde{O}(n+\i)$ time. 

Utilizing our sampling approach, we show that for several problems a $\phi$-approximate solution on a linear sample with expected weight (roughly) $\i\in o(n^2)$ is a $(\phi-\epsilon)$-approximate solution on the input graph.
From an information theory perspective this says that $\tilde{O}(n+\i)$ queries are sufficient to find a $1-\epsilon$ approximate solution for these problems. Moreover, as the sampled graph has a reduced number of edges, if an approximation algorithm on the sampled graph runs in linear time on the sampled edges, the total time is sublinear in the size of the original graph.  

In what follows, after describing the related work and a summary of
our results, we present our sampling method.  Our approach decomposes
the graph into a sequence of subgraphs, where the decomposition
depends strongly on the fact that the graph lies in a metric space.
Using this decomposition, and an estimate of the average edge weight
in the graph, we can determine a suitable sampled graph. 
%(We note we provide a novel approach to estimating the average edge weight, improving a previous result from \cite{indyk1999sublinear}.)
We then show this sampling approach allows us to find sublinear approximation
algorithms for several problems, including densest subgraph and max cut, in
the manner described above.  

In some applications, such as diversity maximization, it can be beneficial to go slightly beyond metric distances~\cite{zadeh2017scalable}. 
We can extend our results to more general spaces that satisfy what is commonly referred to as a parametrized triangle
inequality~\cite{andreae1995performance,bender2000performance,chandran2002approximations}, in which for every three points $a$, $b$ and $c$ we have $w_{a,b}+w_{b,c} \geq \lambda w_{c,a}$ for a parameter
$\lambda$.   As an example, if the weight of each edge $(u,v)$ is the squared distance between the two points, the graph satisfies 
a parametrized triangle inequality with $\lambda = 1/2$.  We provide analysis for this more general setting throughout, and
refer to a graph satisfying such a parametrized triangle inequality as a $\lambda$-metric graph.  (Throughout, we take $\lambda \leq 1$).

%{\bf MM:  For now leaving the issue of providing examples -- e.g. edge weights $w_e^p$.  Also probably good to get a citation
%for parametrized triangle inequality.}  

\subsection{Our Results}\label{subsec:results}
As our main technical contribution we provide an approach to sample a graph $\H = (V,E_H)$ from a $\lambda$-metric graph $G=(V,E_G)$ with the properties specified below that makes only $\tilde{O}(\frac {n+\i}{\lambda})$ queries and succeeds with probability at least $1-O(1/n)$. It is easy to observe that our algorithm runs in $\tilde{O}(\frac{n+\i}{\lambda})$ time as well.
\begin{itemize}
	\item For some fixed factor $\alpha$ (which is a function of $\beta$) independently for each edge $e$ we have:
	\begin{itemize}
		\item If $\alpha w_e\leq 1$, we have edge $e$ with weight $1$ in $E_H$ with probability $\alpha w_e$.
		\item If $\alpha w_e > 1$, we have edge $e$ with weight $\alpha w_e$ in $E_H$.
	\end{itemize}
	\item We have $\i \leq \ex{\sum_{e\in E_H} w'_e} \leq 2\i $, where $w'_e$ is the weight of $e$ in $\H$.\footnote{This can be extended to $\i \leq \ex{\sum_{e\in E_H} w'_e} \leq (1+\gamma)\i $, for any arbitrary $\gamma$ (See the footnote on Theorem \ref{thm:sketch:mainH} for details.)  In our work, the upper bound only affects the number of queries; we prefer to set $\gamma = 1$ and simplify the argument.}
\end{itemize}
As the weight of each edge in $E_H$ is at least $1$, $\ex{\sum_{e\in E_H} w_e} \leq 2\i$ implies that $\ex{|E_H|} \leq 2\i$.

\begin{comment}
\begin{theorem}
	For any arbitrary $\i$ one can construct $\H$ using $O(n\log^2 n + n\log^2 \i+\frac{\i}{\lambda}+\frac{n\log n}{\lambda})\in \tilde{O}(\frac{n+\i}{\lambda})$ queries, with probability at least $1-O(\frac 1 n)$.
\end{theorem}
\end{comment}

We note that for three points $a$, $b$ and $c$ in a $\lambda$-metric space and any parameter $p$, $w_{a,b}+w_{b,c} \geq \lambda w_{c,a}$ directly implies $w_{a,b}^p+w_{b,c}^p \geq \frac {\lambda} {2^p} w_{c,a}^p$. Therefore one can use our technique to sample edges proportional to $w_e^p$ (a.k.a. $l_p$ sampling). In the streaming setting, $l_p$ sampling has been extensively studied and appears to have several applications~\cite{monemizadeh20101}; as far as we are aware, our approach provides the first $l_p$ sampling techniques that uses a sublinear number of edge weight queries. 

As previously mentioned, in Section~\ref{sec:app} we consider several problems and show that for some $\i \in o(n^2)$, any $\phi$-approximate solution of the problem on $\H$ is an $(\phi-\epsilon)$-approximate solution on the original graph with high probability. Specifically, we show that $\i \in O(\frac{n \log n}{\epsilon^2})$ is sufficient to approximate densest subgraph and max cut, $\i \in O(\frac{n^2 \log n}{\epsilon^2k})$ is sufficient to approximate $k$-hypermatching, and $\i \in O(\frac{\log n}{\epsilon^2})$ is sufficient to approximate the average distance. Notice that these results directly imply (potentially exponential time) $(1-\epsilon)$ approximation algorithms with sublinear number of queries for each of the problems.  Often our methodology can also yield sublinear time algorithms (since it uses a sublinear number of edges) with possibly worse approximation ratios.  

We now briefly describe specific results for the various problems we consider, although we defer the formal problem definitions to Section \ref{sec:app}. All of the algorithms discussed below work with high probability.  We note that, throughout the paper, we use $\log n$ for $\log_e n$.  

\begin{comment}
 This implies the following theorem.
\begin{theorem}
\label{thm:main}
In a $\lambda$-metric space, there exists a $(1-\epsilon)$-approximation algorithm that succeeds with probability $1-O(\frac 1 n)$ for
	\begin{itemize}
		\item average distance using $\tilde{O}(n+\frac{1}{\epsilon^2})$ queries,
		\item densest subgraph using $\tilde{O}(\frac{n}{\epsilon^2})$ queries,
		\item $k$-hypermatching using $\tilde{O}(\frac{n^2}{\epsilon^2k})$ queries,	
		\item and max cut using $\tilde{O}(\frac{n}{\epsilon^2})$ queries.
	\end{itemize}
\end{theorem}
{\bf MM:  Add $\lambda$ in the above or state $\lambda$ is constant.}
Theorem~\ref{thm:main} focuses on the number of required queries, but not the running time.  
\end{comment}

For average distance, we provide a $(1-\epsilon)$-approximation algorithm that simply finds the sum of the weights of the edges in $\H$ for $\i \in O(\frac{\log n}{\epsilon^2})$, and hence our algorithm runs in time $\tilde{O}(\frac{n+\frac{1}{\epsilon^2}}{\lambda})$. For a metric graph, this improves the running time of the previous result of Indyk~\cite{indyk1999sublinear} that runs in $O(\frac{n}{\epsilon^{3.5}})$ time, with constant probability.

% In particular, for metric space this gives an algorithm that $1-\epsilon$ estimate the average distance in $\tilde{O}(n+\frac{1}{\epsilon^2})$ time, with probability $1-O(\frac 1 n)$. 

For densest subgraph, the greedy algorithm yields a $1/2$-approximate solution in time quasilinear in the number of edges~\cite{charikar2000greedy}. The expected number of edges of $\H$ can be bounded by $\tilde{O}(\frac{n}{\lambda \epsilon^2})$ for the densest subgraph on $\lambda$-metric graphs. Therefore, our result implies a $(1/2-\epsilon)$-approximation algorithm for densest subgraph in $\lambda$-metric spaces requiring $\tilde{O}(\frac{n}{\lambda \epsilon^2})$ time.

A sublinear time algorithm for a $(1-\epsilon)$ approximation for metric max cut is already known~\cite{indyk1999clustering}. The previous result uses $\tilde{O}(\frac n {\epsilon^5})$ queries, while we use only $\tilde{O}(\frac n {\epsilon^2})$ queries.  (We note that this result does not improve the running time, but remains interesting from an information theoretic point of view. Indeed, there are several interesting results on sublinear space algorithms that ignore the computational complexity
e.g., max cut~\cite{bhaskara2018sublinear,kapralov20171+,kapralov2015streaming,kogan2015sketching},
set cover~\cite{assadi2017tight,har2016towards},
vertex cover and hypermatching~\cite{chitnis2015parameterized,chitnis2016kernelization}.) 
 
Finally, on the hardness side, in Section \ref{sec:hard} we show that $\Omega(n)$ queries are necessary even if one just wants to approximate the size of the solution for densest subgraph, $k$-hypermatching, max cut, and average distance. 

\begin{comment}
\begin{theorem}
	Any approximation algorithm that estimates the size of the solution of any of the following problems in a metric space with probability $0.51$ requires $\Omega(n)$ queries:
	\begin{itemize}
		\item average distance;
		\item densest subgraph;
		\item $k$-hypermatching;
		\item max cut.
	\end{itemize}
\end{theorem}
\end{comment}

\subsection{Other Related Work}
Metric spaces are natural in their own right.  For example, they represent
geographic information, and hence graph problems such as the densest
subgraph problem often have a natural interpretation in metric spaces.
%-- for example, find the densest collection of Starbucks within a given area.  
It also is often reasonable to manage large data sets by
embedding objects within a suitable metric space.  In networks, for
example, the idea of finding network coordinates consistent with latency
measurements to predict latency has been widely studied~\cite{cox2004practical,ledlie2007network,Pietzuch,shavitt2004big,tang2015line,verbeek2016metric}.  

There are several works on designing sublinear algorithms for
different variants of clustering problems in metric spaces due to
their application to machine
learning~\cite{alon2003testing,buadoiu2005facility,czumaj2004sublinear,czumaj2007small,indyk1999clustering}.
We briefly summarize some of these papers.  Alon et al. studies the
efficiency of uniform sampling of vertices to check for given parameters $k$
and $b$ if the set of points can be clustered into $k$ subsets each
with diameter at most $b$, ignoring up to an $\epsilon$ fraction of
the vertices~\cite{alon2003testing}.
Czumaj and Sohler studies the efficiency of uniform sampling of
vertices for $k$-median, min-sum $k$-clustering, and balanced
$k$-median~\cite{czumaj2004sublinear}.
Badoiu et al. consider the facility location problem in metric space~\cite{buadoiu2005facility}. They compute the optimal cost of the minimum facility location problem, assuming uniform costs and demands, and assuing every point can open a facility. Moreover, they show that there is no $o(n^2)$ time algorithm that approximates the optimal solution of general case of metric facility location problem to within any factor.

% In appendix \ref{app:dens} we
% show that uniform sampling of vertices fails to estimate the density
% of densest subgraph efficiently.

A basic and natural difference between these previous works on clustering problems and the densest subgraph problem that we consider here is that all previous problems aim to decompose the graph into two or more subsets, where each subset consists of points that are close to each other. However, densest subgraph in a metric space aims to pick a diverse, spread out subset of points. (While perhaps counterintuitive, this is clear from the definition, which we provide shortly.)  The application of metric densest subgraph in diversity maximization and feature selection is well studied~\cite{bhaskara2016linear,zadeh2017scalable}.

Sublinear algorithms may also refer to sublinear space algorithms such
as streaming algorithms.  A related, well-studied setting is
semi-streaming~\cite{muthukrishnan2005data}, often used for graph
problems. In the semi-streaming setting the input is a stream of edges
and we take one (or a few) passes over the stream, while only using
$\tilde{O}(n)$ space.  Semi-steaming algorithms have been extensively
studied~\cite{ahn2009,AG11,EpsteinLMS09,mcgregor2005,KelnerL11,KonradR13}.

For the densest subgraph problem, there have been a number of recent papers showing the efficiency of uniform edge sampling in unweighted graphs~\cite{bhattacharya2015space,esfandiari2015applications,mcgregor2015densest,mitzenmacher2015scalable}. Initially, Bhattacharya et al. provided a $0.5$ approximation semi-streaming algorithm for this problem~\cite{bhattacharya2015space}. They extended their approach to obtain a $0.25$ approximation algorithm for this problem for dynamic streams with $\tilde{O}(1)$ update time and $\tilde{O}(n)$ space. McGregor et al. and Esfandiari et al. independently provide a $(1-\epsilon)$-approximation semi-streaming algorithm for this problem~\cite{esfandiari2015applications,mcgregor2015densest}. Esfandiari et al. extend the analysis of uniform sampling of edges to several other problem. Mitzenmacher et al. study the efficiency of uniform edge sampling for densest subgraph in hypergraphs~\cite{mitzenmacher2015scalable}.

For the max cut problem, Kapralov, Khanna, and Sudan~\cite{kapralov2015streaming} and independently Kogan and Krauthgamer~\cite{kogan2015sketching}  showed
that a streaming $(1-\epsilon)$-approximation algorithm to estimate the size of max cut requires $n^{1-O(\epsilon)}$ space. Later, Kapralov, Khanna and Sudan~\cite{kapralov20171+} show that for some small $\epsilon$ any streaming $(1-\epsilon)$-approximation algorithm to estimate the size of max cut requires $\Omega(n)$ space. 
Very recently, Bhaskara et al.~\cite{bhaskara2018sublinear} provide a $2$-pass $(1-\epsilon)$-approximation streaming algorithm using $\tilde{O}(n^{1-\delta})$ space for graphs with average degree $n^{\delta}$.

Finally, when considering matching algorithms, there are numerous works on maximum matching in streaming and semi-streaming setting
~\cite{assadi2015tight,behnezhad2017brief,chitnis2015brief,chitnis2015parameterized,chitnis2016kernelization,konrad2012maximum,konrad2015maximum,esfandiari2015streaming,kapralov2014approximating}. Note that a maximal matching is a $0.5$ approximation to the maximum matching, and it is easy to provide one in the semi-streaming setting. However, improving this approximation factor in one pass is yet open. There are several works that improve this approximation factor in a few passes~\cite{ahn2011linear,behnezhad2017brief,konrad2012maximum,mcgregor2005finding}.
Maximum matching in hypergraphs has also been considered in the streaming setting~\cite{chitnis2016kernelization}.	
 
While a $0.5$-approximation for unweighted matching in the semi-streaming setting is trivial, such an approximation for weighted matching appears nontrivial. There is a sequence of works improving the approximation factor of weighted matching in the semi-streaming setting~\cite{bury2015sublinear,EpsteinLMS09,paz20172+}, and just recently Paz and Schwartzman provide a semi-streaming $(0.5-\epsilon)$-approximation algorithm.

There are, of course, many, many other related problems;  see~\cite{czumaj2010sublinear}, for example, for a survey on sublinear algorithms.

%
%Algorithm that maintain $1-\epsilon$ approximation of all cuts~\cite{ahn2009,ahn2012graph,goel2012single} using $\tilde{O}(n/\epsilon^2)$ space, which is nearly optimal~\cite{andoni2014sketching}.

%\subsection{notations}
%\input{notations}

\section{Providing a Linear Sampling}\label{sec:sample}
In this section we provide a technique to construct the desired sampled graph $\H = (V,E_H)$ from a metric graph $G=(V,E_G)$.
We first provide a useful decomposition of the graph.  We show this decomposition allows us to obtain a graph
$\Ha$ that satisfies the first property of $\H$, namely that edge weights are scaled down (in expectation,
for edges with scaled weights less than 1) by a factor of $\alpha$.  
We then show how to determine a proper value $\alpha$ so that expected sum of the edge weight is between $\beta$
and $2\beta$ as desired. 

\subsection{A Graph Decomposition}

We start with a decomposition for a metric graph $G$,
assuming an upper bound $L$ on the weight of the edges. For an suitable number $t$ determined later we define the following sequences. 
\begin{itemize}
	\item A sequence of graphs $G = G_1\supseteq G_2\supseteq \dots \supseteq G_{t}$.
	\item A sequence of vertex sets $\nu_1,\dots,\nu_t$.
	\item A sequence of weights, $L_1=L, L_2=\frac{L_1}2, \dots, L_{t} = \frac{L_{t-1}}2$.
\end{itemize}
We denote the vertex set and edge set of $G_i$ by $V_i$ and $E_i$ respectively.
We begin with $G_1 = G$, and $G_i$ is constructed from $G_{i-1}$ by removing vertices in $\nu_{i-1}$, i.e. $G_i=G_{i-1}\setminus \nu_{i-1}$.  However, defining $\nu_i$, which depends on $G_i$, requires the following additional definitions.  
For any $i\in \{1,\dots,t\}$ and $\Lambda \in [0,1]$, define $G_i^{\Lambda} = (V_i,E_i^{\Lambda})$ to be the graph obtained by removing all edges with weight less than $\Lambda L_i$ from $G_i$, i.e., $e \in E_i^{\Lambda}$ if and only if $e \in E_i$ and $w_e \geq \Lambda L_i$. 

We now define $G_i$ and $\nu_i$ iteratively as follows. We define $\nu_{i,\xi}$ to be the set of vertices in $G_i^{\lambda/4}$ with degree at least $\xi|V_i|$. We let $\nu_i$ be an arbitrary subset such that $\nu_{i,1/2} \subseteq \nu_i \subseteq \nu_{i,1/4}$. As mentioned, $G_1=G$ and $G_i=G_{i-1}\setminus \nu_{i-1}$. We define $E_{\nu_i} = E_i \setminus E_{i+1}$. Note that $E_{\nu_i}$ is the set of edges neighboring $\nu_i$ in $G_i$. 
% (We use this definition later in the proof of Lemma \ref{lm:sketch:costbound}.)

\begin{lemma}\label{lm:sketch:vc}
	For any $i\in \{1,\dots,t\}$, the set of vertices in $G_i^{\lambda/4}$ with degree at least $\frac{|V_i|}{2}$ (i.e., $\nu_{i,1/2}$) is a vertex cover for $G_i^{1/2}$.
\end{lemma}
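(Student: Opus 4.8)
The plan is to argue by contradiction. Suppose $\nu_{i,1/2}$ is not a vertex cover of $G_i^{1/2}$; then there is an edge $e=(a,b)\in E_i^{1/2}$ with neither endpoint in $\nu_{i,1/2}$. By definition of $E_i^{1/2}$ we have $w_{a,b}\ge \tfrac12 L_i$, and since $a,b\notin\nu_{i,1/2}$, each of $a$ and $b$ has degree strictly less than $\tfrac{|V_i|}{2}$ in $G_i^{\lambda/4}$. The goal is to show this forces a contradiction, so I first want to understand which vertices $a$ and $b$ are adjacent to in $G_i^{\lambda/4}$.

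The key step is a triangle-inequality argument. Take any vertex $c\in V_i$. Since $G$ is a $\lambda$-metric graph, $w_{a,c}+w_{c,b}\ge \lambda w_{a,b}\ge \tfrac{\lambda}{2}L_i$, so at least one of $w_{a,c},w_{c,b}$ is at least $\tfrac{\lambda}{4}L_i$, i.e. $c$ is a $G_i^{\lambda/4}$-neighbour of $a$ or of $b$ (or both). Hence every vertex of $V_i$ lies in $N_{G_i^{\lambda/4}}(a)\cup N_{G_i^{\lambda/4}}(b)$ (the two endpoints themselves are handled by noting $a$ is a neighbour of $b$ and vice versa, since $w_{a,b}\ge\tfrac12 L_i\ge\tfrac{\lambda}{4}L_i$ using $\lambda\le 1$, or simply by excluding $a,b$ and bounding $|V_i|-2$). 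Therefore
\[
|V_i| \;\le\; \deg_{G_i^{\lambda/4}}(a) + \deg_{G_i^{\lambda/4}}(b) \;<\; \tfrac{|V_i|}{2} + \tfrac{|V_i|}{2} \;=\; |V_i|,
\]
a contradiction. Hence no such uncovered edge exists, and $\nu_{i,1/2}$ is a vertex cover of $G_i^{1/2}$.

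I expect the only subtlety — and the one point worth being careful about — is the bookkeeping around whether the endpoints $a,b$ themselves are counted among the vertices covered by $N(a)\cup N(b)$, and the exact use of $\lambda\le 1$ and the factor-of-two slack between the thresholds $\lambda/4$ (for the decomposition graph) and $1/2$ (for the edge whose existence we assumed); the arithmetic $\tfrac{\lambda}{2}\ge 2\cdot\tfrac{\lambda}{4}$ is what makes the "at least one endpoint" dichotomy work, so one must make sure the inequalities are stated with the right constants. Everything else is a direct counting argument, so there is no real obstacle beyond this.
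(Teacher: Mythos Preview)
Your proposal is correct and is essentially the same argument as the paper's: both use the $\lambda$-triangle inequality to show that for any heavy edge $(a,b)\in E_i^{1/2}$ every vertex of $V_i$ is a $G_i^{\lambda/4}$-neighbour of $a$ or of $b$, yielding $\deg(a)+\deg(b)\ge |V_i|$. The paper phrases this as a direct degree-sum bound (concluding one endpoint has degree $\ge |V_i|/2$), while you phrase it contrapositively, but the content and the handling of the endpoints $a,b$ via the edge $(a,b)$ itself are identical.
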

\begin{proof}
	Let $(u,v)\in E_i^{1/2}$ be an edge in $G_i^{1/2}$, and let $d_u$ and $d_v$ be the degrees of $u$ and $v$ in $G_i^{\lambda/4}$ respectively. Next we show that $d_u+d_v \geq |V_i|$. Hence we have $d_u\geq \frac{|V_i|}{2}$ or $d_v \geq \frac{|V_i|}{2}$. This means that $\nu_{i,1/2}$ covers $(u,v)$ as desired.
	
	%Let $Y_{e}$ be a [deterministic] binary variable that is $1$ if $e\in G_i^{1/4}$
	Notice that, $(u,v)\in E_i^{1/2}$ means that $w_{u,v} \geq \frac{L_i}{2}$. Hence, by the  $\lambda$-triangle inequality, for any $v' \in V_i$ we have $w_{(u,v')} + w_{(v',v)} \geq \lambda w_{(u,v)} \geq \frac{\lambda L_i}{2}$. Now we are ready to bound $d_u+d_v$.
	\begin{align*}
		d_u+d_v &=  \Big( 1 + \sum_{v' \in V_i\setminus \{u,v\}} 1_{(u,v') \in E_i^{\lambda/4}} \Big) + \Big( 1 + \sum_{v' \in V_i\setminus \{u,v\}} 1_{(v,v') \in E_i^{\lambda/4}} \Big) &\text{Extra $1$ is for $(u,v)$}\\
		&= 2 + \sum_{v' \in V_i\setminus \{u,v\}} \Big( 1_{(u,v') \in E_i^{\lambda/4}} + 1_{(v,v') \in E_i^{\lambda/4}} \Big)\\
		&= 2 + \sum_{v' \in V_i\setminus \{u,v\}} \Big( 1_{w_{(u,v')} \geq {\lambda L_i}/{4}} + 1_{w_{(v,v')} \geq {\lambda L_i}/{4}} \Big)\\
		& \geq 2 + \sum_{v' \in V_i\setminus \{u,v\}} 1 &\text{Since $w_{(u,v')}+ w_{(v',v)}  \geq \frac{\lambda L_i}{2}$}\\
		& = 2 + |V_i| - 2 = |V_i|,
	\end{align*}
	which completes the proof.
\end{proof}

\begin{lemma}\label{lm:sketch:L}
	For any $i\in \{1,\dots,t\}$, $L_i$ is an upper bound on weight of the edges in $G_i$, i.e., we have $\max_{e\in E_i} w_e \leq L_i$.
\end{lemma}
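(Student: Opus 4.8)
The plan is to prove the statement by induction on $i$. For the base case $i=1$ we have $G_1 = G$ and $L_1 = L$, and at the outset $L$ was assumed to be an upper bound on the weight of every edge of $G$, so there is nothing to check.

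For the inductive step, suppose $\max_{e \in E_{i-1}} w_e \leq L_{i-1}$; I want to conclude $\max_{e \in E_i} w_e \leq L_i = L_{i-1}/2$. Since $G_i = G_{i-1}\setminus \nu_{i-1}$, every edge of $E_i$ is an edge of $E_{i-1}$ with both endpoints outside $\nu_{i-1}$, so it suffices to show that every edge $(u,v)\in E_{i-1}$ with $w_{u,v} > L_{i-1}/2$ has an endpoint in $\nu_{i-1}$ and is therefore removed. Such an edge satisfies $w_{u,v} \geq \tfrac12 L_{i-1}$, so by definition $(u,v)\in E_{i-1}^{1/2}$, i.e. it is an edge of $G_{i-1}^{1/2}$.

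Now I invoke Lemma~\ref{lm:sketch:vc} with index $i-1$: the set $\nu_{i-1,1/2}$ of vertices of $G_{i-1}^{\lambda/4}$ with degree at least $|V_{i-1}|/2$ is a vertex cover of $G_{i-1}^{1/2}$. Hence $(u,v)$ has an endpoint in $\nu_{i-1,1/2}$, and since $\nu_{i-1,1/2}\subseteq \nu_{i-1}$ by the choice of $\nu_{i-1}$ (recall $\nu_{i-1,1/2}\subseteq \nu_{i-1}\subseteq \nu_{i-1,1/4}$), that endpoint lies in $\nu_{i-1}$. Thus $(u,v)$ is deleted in passing from $G_{i-1}$ to $G_i$, so $(u,v)\notin E_i$, which establishes the inductive step.

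There is no genuine obstacle here; the argument reduces entirely to Lemma~\ref{lm:sketch:vc} plus the defining containment of $\nu_{i-1}$. The only mild subtlety is the behaviour at the threshold: an edge of weight exactly $L_{i-1}/2$ must also be removed, and this is automatic because $E_{i-1}^{1/2}$ is defined with the non-strict inequality $w_e \geq \tfrac12 L_{i-1}$, so such an edge lies in $G_{i-1}^{1/2}$ and is covered, hence deleted, just like the strictly heavier ones.
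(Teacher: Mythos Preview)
Your proof is correct and follows essentially the same approach as the paper: both argue that for $i>1$, Lemma~\ref{lm:sketch:vc} makes $\nu_{i-1,1/2}$ a vertex cover of $G_{i-1}^{1/2}$, and the containment $\nu_{i-1,1/2}\subseteq\nu_{i-1}$ then guarantees every edge of weight at least $L_{i-1}/2$ is removed when forming $G_i$. The only cosmetic difference is that you set it up as an induction, although you never actually use the inductive hypothesis $\max_{e\in E_{i-1}} w_e \leq L_{i-1}$; the paper accordingly presents the same step without the inductive framing.
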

\begin{proof}
	For $i=1$, $L_1=L$ which is an upper bound on weight of the edges in $G_1=G$. For $i>1$,  $\nu_{i-1,1/2}$ is a vertex cover of $G_{i-1}^{1/2}$, by Lemma \ref{lm:sketch:vc}. Moreover, by definition we have $\nu_{i-1,1/2}\subseteq \nu_{i-1}$. Hence $\nu_{i-1}$ is a vertex cover of $G_{i-1}^{1/2}$. This means that every edge with weight at least $\frac{L_{i-1}}{2}$ has a neighbor in $\nu_{i-1}$. Recall that $G_i=G_{i-1}\setminus \nu_{i-1}$, and hence, $G_i$ has no edge with weight at least $\frac{L_{i-1}}{2} = L_i$.
\end{proof}

The following theorem compares the average weight of the edges in $E_{\nu_i}$ with $L_i$. We later use this in Theorem \ref{thm:sketch:Ha} to bound the number of queries.

\begin{lemma} \label{lm:sketch:costbound}
	For any $i\in \{1,\dots,t\}$, we have
	$$ \frac{\lambda}{32} L_i |V_i||\nu_i| \leq \sum_{e \in E_{\nu_i} } w_e \leq L_i |V_i||\nu_i|.$$
\end{lemma}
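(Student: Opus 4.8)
The plan is to prove the two inequalities separately. The upper bound is the easy direction: every edge $e \in E_{\nu_i}$ lies in $E_i$, so by Lemma~\ref{lm:sketch:L} we have $w_e \le L_i$. Since $E_{\nu_i}$ is exactly the set of edges of $G_i$ incident to $\nu_i$, there are at most $|V_i||\nu_i|$ such edges (each vertex of $\nu_i$ has at most $|V_i|-1$ neighbors in $G_i$, and this crude count suffices), giving $\sum_{e\in E_{\nu_i}} w_e \le L_i |V_i||\nu_i|$.

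For the lower bound the strategy is to exhibit many \emph{heavy} edges incident to $\nu_i$. Recall $\nu_{i,1/2} \subseteq \nu_i$, so it is enough to lower-bound the weighted degree of each vertex $v \in \nu_{i,1/2}$. Actually more carefully: we want a bound in terms of $|\nu_i|$, not $|\nu_{i,1/2}|$, so I would instead use that every $v \in \nu_i$ satisfies $\nu_i \subseteq \nu_{i,1/4}$, i.e. $v$ has degree at least $\tfrac14|V_i|$ in $G_i^{\lambda/4}$. Each such neighbor contributes an edge of weight at least $\tfrac{\lambda}{4}L_i$. Summing over $v \in \nu_i$ counts each edge of $E_{\nu_i}$ at most twice (an edge with both endpoints in $\nu_i$ is counted twice; an edge with one endpoint in $\nu_i$ once), so
\[
\sum_{e \in E_{\nu_i}} w_e \;\ge\; \frac12 \sum_{v \in \nu_i} \Big(\text{weighted degree of }v\text{ in }G_i^{\lambda/4}\Big) \;\ge\; \frac12 \cdot |\nu_i| \cdot \frac{|V_i|}{4} \cdot \frac{\lambda L_i}{4} \;=\; \frac{\lambda}{32} L_i |V_i||\nu_i|.
\]
One subtlety: the edges counted in the degree of $v$ in $G_i^{\lambda/4}$ are edges of $G_i$ incident to $v \in \nu_i$, hence they lie in $E_{\nu_i} = E_i \setminus E_{i+1}$ — indeed any edge of $G_i$ touching $\nu_i$ is removed when passing to $G_{i+1} = G_i \setminus \nu_i$, so it belongs to $E_{\nu_i}$. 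This is what makes the double-counting argument land inside the sum over $E_{\nu_i}$.

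The main obstacle, such as it is, is just making sure the constant $1/32$ is correct and that the degree-at-least-$\tfrac14|V_i|$ versus degree-at-least-$\tfrac12|V_i|$ distinction is handled with the right choice ($\nu_i \subseteq \nu_{i,1/4}$ gives the $1/4$ factor, and that is the one we need for a bound valid for \emph{all} of $\nu_i$). Everything else is bookkeeping: the factor $\tfrac12$ from double-counting, the factor $\tfrac14$ from the degree threshold, and the factor $\tfrac14$ from the weight threshold $\lambda L_i/4$, multiplying to $\tfrac{\lambda}{32}$. I would also remark that no triangle inequality is needed here — unlike Lemma~\ref{lm:sketch:vc}, this lemma is purely a counting statement about $G_i^{\lambda/4}$ and the definition of $\nu_i$.
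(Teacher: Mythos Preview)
Your proof is correct and matches the paper's argument essentially line for line: the upper bound via $|E_{\nu_i}|\le |V_i||\nu_i|$ together with Lemma~\ref{lm:sketch:L}, and the lower bound via $\nu_i\subseteq\nu_{i,1/4}$, the weight threshold $\lambda L_i/4$, and the factor $1/2$ for double-counting. Your observation that the triangle inequality is not invoked here is also accurate.
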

\begin{proof}
	We start by proving the upper bound. Recall that $E_{\nu_i}$ is the set of edges neighboring $\nu_i$ in $G_i$. Hence the number of edges in $E_{\nu_i}$ is upper bounded by sum of the degrees of the vertices of $\nu_i$ in $G_i$. The degree of each vertex in $G_i$ is $|V_i|-1 < |V_i|$, and there are $|\nu_i|$ vertices in $\nu_i$. Thus, we have $|E_{\nu_i}|\leq |V_i||\nu_i|$. Moreover, by Lemma \ref{lm:sketch:L}, for each $e \in E_{\nu_i} \subseteq E_i$ we have $w_e \leq L_i$. Therefore we have
	\begin{align*}
	\sum_{e \in E_{\nu_i} } w_e \leq \sum_{e \in E_{\nu_i} } L_i \leq L_i |V_i||\nu_i|.
	\end{align*} 
	
	Next we prove the lower bound. Recall that we have $\nu_i \subseteq \nu_{i,1/4}$. Thus, for each $v\in \nu_i$, the degree of $v$ in $G_i^{\lambda/4}$ is at least $\frac{|V_i|}{4}$. Thus, for any \emph{fixed} $v\in \nu_i$ we have
	\begin{align}\label{eq:sketch:lowbound}
	\sum_{(u,v) \in E_{\nu_i}} w_{(u,v)} &= \sum_{(u,v) \in E_i} w_{(u,v)} &\text{Definition of $E_{\nu_i}$ for $v \in \nu_i$}\\ 
	&\geq \sum_{(u,v) \in E_i^{\lambda/4}}  w_{(u,v)} &G_i^{\lambda/4}\subseteq G_i\nonumber\\
	&\geq \sum_{(u,v) \in E_i^{\lambda/4}} \frac{\lambda L_i}{4} & \text{Definition of $G_i^{\lambda/4}$}\nonumber\\
	&\geq \frac{|V_i|}{4} \frac{\lambda L_i}{4} = \frac {\lambda} {16} |V_i|L_i. &v \in \nu_i \subseteq \nu_{i,1/4}
	\end{align}
	Note that each edge in $E_{\nu_i}$ intersects at most two vertices in $\nu_i$. Therefore, we have
	\begin{align*}
	\sum_{e \in E_{\nu_i} } w_e &\geq \frac 1 2 \sum_{v \in \nu_i} \sum_{(u,v) \in E_{\nu_i}} w_{(u,v)} \\
	&\geq \frac 1 2 \sum_{v \in \nu_i} \frac {\lambda} {16} |V_i|L_i &\text{By Inequality \ref{eq:sketch:lowbound}}\\
	&\geq \frac{\lambda}{32} L_i |V_i||\nu_i|,
	\end{align*}
	which completes the proof of the lemma.
\end{proof}

Lemma~\ref{lm:sketch:makeSeq} provides a technique to construct $\nu_i$ using $\tilde{O}(n)$ queries, with high probability. This to prove this lemma we sample some edges. Notice that these sampled edges are different from the edges that we sample to keep in $\H$.
We use the following standard version of the Chernoff bound (see e.g. \cite{mitzenmacher2005probability}) in Lemma~\ref{lm:sketch:makeSeq} as well as the rest of the paper.
\begin{lemma}[Chernoff Bound]
	Let $x_1,x_2,\dots,x_r$ be a sequence of independent binary (i.e., $0$ or $1$) random variables, and let $X=\sum_{i=1}^{r}x_i$. For any $\epsilon \in [0,1]$, we have 
	\begin{align*}
	\Pr\left(|X-\E[X]| \geq \epsilon \E[X] \right) \leq 2 \exp(-\epsilon^2\E[X]/3).
	\end{align*}
\end{lemma}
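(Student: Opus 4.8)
The plan is to derive this two-sided multiplicative bound by the standard exponential-moment (Chernoff--Bernstein) argument: bound the upper tail $\Pr(X \ge (1+\epsilon)\mu)$ and the lower tail $\Pr(X \le (1-\epsilon)\mu)$ separately, where $\mu := \E[X] = \sum_{i=1}^r p_i$ with $p_i := \Pr(x_i = 1)$, and then combine the two estimates with a union bound.

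For the upper tail, the first step is Markov's inequality applied to $e^{\theta X}$ with a free parameter $\theta > 0$, which gives $\Pr(X \ge (1+\epsilon)\mu) \le e^{-\theta(1+\epsilon)\mu}\,\E[e^{\theta X}]$. By independence of the $x_i$, $\E[e^{\theta X}] = \prod_{i=1}^r \E[e^{\theta x_i}] = \prod_{i=1}^r \bigl(1 + p_i(e^\theta-1)\bigr) \le \prod_{i=1}^r e^{p_i(e^\theta-1)} = e^{\mu(e^\theta-1)}$, using $1+y \le e^y$. Plugging in the optimizing choice $\theta = \ln(1+\epsilon)$ yields
\[
\Pr(X \ge (1+\epsilon)\mu) \;\le\; \left( \frac{e^{\epsilon}}{(1+\epsilon)^{1+\epsilon}} \right)^{\mu}.
\]
The second step is to reduce this to the claimed form: it suffices to show $h(\epsilon) := (1+\epsilon)\ln(1+\epsilon) - \epsilon \ge \epsilon^2/3$ for $\epsilon \in [0,1]$. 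This is the only genuinely analytic point, and it follows because $h(0) = h'(0) = 0$ and $h''(\epsilon) = \tfrac{1}{1+\epsilon} \ge 1-\epsilon$, so $h(\epsilon) \ge \tfrac{\epsilon^2}{2} - \tfrac{\epsilon^3}{6} \ge \tfrac{\epsilon^2}{3}$ on $[0,1]$ (the last step being just $\epsilon \le 1$). Hence $\Pr(X \ge (1+\epsilon)\mu) \le e^{-\epsilon^2\mu/3}$.

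For the lower tail I would run the symmetric computation with $\theta < 0$: Markov on $e^{\theta X}$ gives $\Pr(X \le (1-\epsilon)\mu) \le e^{-\theta(1-\epsilon)\mu}\, e^{\mu(e^\theta - 1)}$, and the optimal $\theta = \ln(1-\epsilon)$ produces $\Pr(X \le (1-\epsilon)\mu) \le \left( e^{-\epsilon} (1-\epsilon)^{-(1-\epsilon)} \right)^{\mu}$. Reducing this bound again comes down to an elementary scalar inequality, namely $(1-\epsilon)\ln(1-\epsilon) \ge -\epsilon + \tfrac{\epsilon^2}{2}$ for $\epsilon \in [0,1)$ (proved the same way: the left minus right side vanishes with its derivative at $0$ and is convex), which gives $\Pr(X \le (1-\epsilon)\mu) \le e^{-\epsilon^2\mu/2} \le e^{-\epsilon^2\mu/3}$; the endpoint $\epsilon = 1$ is handled directly, since then $\Pr(X \le 0) = \prod_i(1-p_i) \le e^{-\mu} \le e^{-\mu/3}$. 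Finally, a union bound over the events $\{X \ge (1+\epsilon)\mu\}$ and $\{X \le (1-\epsilon)\mu\}$ gives $\Pr(|X - \mu| \ge \epsilon\mu) \le 2 e^{-\epsilon^2 \mu /3}$, as stated.

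The conceptual content here is minimal; the only mildly delicate part is the bookkeeping for the two one-variable inequalities that convert $\bigl(e^\epsilon (1+\epsilon)^{-(1+\epsilon)}\bigr)^\mu$ and its lower-tail counterpart into clean $e^{-\epsilon^2\mu/3}$ bounds, together with checking that they remain valid up to $\epsilon = 1$ (the endpoint the paper actually uses). For $\epsilon > 1$ the upper-tail estimate would require a separate argument, but that range is irrelevant here since the statement restricts to $\epsilon \in [0,1]$.
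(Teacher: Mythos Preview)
Your argument is correct: the exponential-moment computation for both tails, the optimization at $\theta=\ln(1\pm\epsilon)$, the two scalar inequalities $(1+\epsilon)\ln(1+\epsilon)-\epsilon\ge\epsilon^2/3$ and $(1-\epsilon)\ln(1-\epsilon)\ge-\epsilon+\epsilon^2/2$ on $[0,1]$, the separate treatment of $\epsilon=1$ for the lower tail, and the final union bound all check out. The paper itself does not prove this lemma at all---it simply states it as a standard fact with a textbook citation---so there is no ``paper's proof'' to compare against; what you wrote is precisely the standard textbook derivation the citation points to.
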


As we are now moving to doing sampling, we briefly remark on some
noteworthy points.  First, there is some probability of failure in our results.  We
therefore refer to the success probability in our results, and note
that our algorithms may fail ``silently''; that is, we may not realize
the algorithm has failed (because of a low probability event in the
sampling).  Also, we emphasize that in general, in what follows, when
referring to the number of queries required, we mean the expected
number of queries. However, using expectations is for convenience; all
of our results throughout the paper could instead be turned into
results bounding the number of queries required with high probability
(say probability $1-O(1/n)$ using Chernoff bounds at the cost of at
most constant factors in the standard way.  Finally, in some places we
may sample which edges we decide to query from a set of edges with a
fixed probability $p$.  In such situations, instead of iterating
through each edge (which could take time quadratic in the number of
vertices) we can generate the number of samples from a binomial
distribution and then generate the samples without replacement;
alternatively, we could determine which sample is the next sample at
each step using by calculating a geometrically distributed random
variable.  We assume this work can be done in constant time per sample.
For this reason, our time depends on the number of queries, and not
the total number of edges.

\begin{lemma}\label{lm:sketch:makeSeq}
	For any $i\in \{1,\dots,t\}$, given $G_i$ and $L_i$, one can construct $\nu_i$ using $192 (\log n +\log t) |V_i| \in \tilde{O}(n)$ expected queries, succeeding with probability at least $1-\frac{1}{n t}$.
\end{lemma}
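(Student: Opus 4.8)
The plan is to estimate, for every vertex $v \in V_i$, its degree in $G_i^{\lambda/4}$ by random sampling, and to decide membership in $\nu_i$ via a threshold placed strictly between the two relevant levels $\tfrac14|V_i|$ and $\tfrac12|V_i|$. Concretely, set $s = 192(\log n + \log t)$. For each $v \in V_i$ draw $s$ vertices $u_1,\dots,u_s$ independently and uniformly from $V_i \setminus \{v\}$ (with replacement; duplicates only waste queries, and if $|V_i|-1 < s$ one may instead query all edges at $v$), query the weights $w_{u_1,v},\dots,w_{u_s,v}$, and let $\hat p_v = \tfrac1s\,\bigl|\{j : w_{u_j,v} \ge \tfrac{\lambda}{4}L_i\}\bigr|$ be the empirical fraction of ``heavy'' edges incident to $v$. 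Output $\nu_i = \{v \in V_i : \hat p_v \ge \tfrac38\}$. This uses at most $s$ queries per vertex, hence at most $192(\log n + \log t)|V_i| \in \tilde{O}(n)$ queries in total; the construction of $\nu_i$ is then immediate from the $\hat p_v$'s.

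Next I would verify that, with the claimed probability, the output satisfies the defining property $\nu_{i,1/2} \subseteq \nu_i \subseteq \nu_{i,1/4}$. Fix $v$ and let $p_v = \deg_{G_i^{\lambda/4}}(v)/(|V_i|-1)$, so that each $u_j$ is heavy with probability exactly $p_v$ and $s\hat p_v \sim \mathrm{Bin}(s,p_v)$. Two elementary observations convert the degree thresholds into clean bounds on $p_v$: if $\deg_{G_i^{\lambda/4}}(v) \ge \tfrac12|V_i|$ then $p_v \ge \tfrac{|V_i|/2}{|V_i|-1} > \tfrac12$; and if $\deg_{G_i^{\lambda/4}}(v) < \tfrac14|V_i|$ then, since the degree is an integer, $\deg_{G_i^{\lambda/4}}(v) \le \lceil\tfrac14|V_i|\rceil - 1 \le \tfrac14(|V_i|-1)$, so $p_v \le \tfrac14$. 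In the first case $s\hat p_v$ stochastically dominates $\mathrm{Bin}(s,\tfrac12)$, hence by the Chernoff bound $\Pr[\hat p_v < \tfrac38] \le \Pr[\mathrm{Bin}(s,\tfrac12) < (1-\tfrac14)\tfrac s2] \le 2\exp(-\tfrac{1}{16}\cdot\tfrac{s/2}{3}) = 2e^{-s/96}$. In the second case $s\hat p_v$ is stochastically dominated by $\mathrm{Bin}(s,\tfrac14)$, hence $\Pr[\hat p_v \ge \tfrac38] \le \Pr[\mathrm{Bin}(s,\tfrac14) \ge (1+\tfrac12)\tfrac s4] \le 2\exp(-\tfrac14\cdot\tfrac{s/4}{3}) = 2e^{-s/48}$. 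Thus each vertex $v$ that is \emph{constrained} (i.e. $\deg_{G_i^{\lambda/4}}(v) \ge \tfrac12|V_i|$ or $< \tfrac14|V_i|$) is classified correctly except with probability at most $2e^{-s/96}$, while vertices with degree in the band $[\tfrac14|V_i|,\tfrac12|V_i|)$ impose no constraint and cannot cause a failure.

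Finally I would take a union bound over the at most $|V_i| \le n$ vertices of $V_i$: the probability that some constrained vertex is misclassified is at most $|V_i| \cdot 2e^{-s/96} \le n\cdot 2\exp(-2(\log n + \log t)) = 2/(nt^2) \le 1/(nt)$ for $t \ge 2$ (using $\log = \log_e$; the small constant slack at $t=1$ is absorbed by enlarging $s$ by an absolute constant or narrowing the acceptance band). On the complementary event, every vertex of $\nu_{i,1/2}$ has $\hat p_v \ge \tfrac38$ and every vertex outside $\nu_{i,1/4}$ has $\hat p_v < \tfrac38$, giving $\nu_{i,1/2} \subseteq \nu_i \subseteq \nu_{i,1/4}$ as required.

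The only mildly delicate points, and where I would concentrate the care, are (i) deriving the bounds $p_v > \tfrac12$ and $p_v \le \tfrac14$ from the degree thresholds, which needs the integrality remark to absorb the $|V_i|$-versus-$|V_i|-1$ mismatch in the denominator, and (ii) setting up the stochastic domination so that the two-sided Chernoff bound (stated only for $\epsilon \in [0,1]$) is invoked at the fixed parameters $\tfrac12$ and $\tfrac14$ rather than at the unknown $p_v$, where the relative deviation from the threshold could be arbitrarily large. Everything else is routine bookkeeping.
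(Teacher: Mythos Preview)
Your proof is correct and follows essentially the same plan as the paper's: estimate each vertex's heavy-neighbor count in $G_i^{\lambda/4}$ by random sampling, threshold at $\tfrac38$, apply Chernoff, and union-bound over $V_i$. The only notable differences are that the paper samples each \emph{edge} of $G_i$ once (with probability $384(\log n+\log t)/|V_i|$) and reuses it for both endpoints rather than sampling neighbors per vertex, and that the paper applies Chernoff directly with an additive deviation of $48(\log n+\log t)$; your stochastic-domination reduction to $\mathrm{Bin}(s,\tfrac12)$ and $\mathrm{Bin}(s,\tfrac14)$ is a cleaner way to keep the relative error within $[0,1]$, exactly the issue you flag in point~(ii), which the paper's write-up glosses over.
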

\begin{proof}
	If $|V_i| \leq 384 (\log n +\log t)$, we have $E_i = {|V_i| \choose 2} = \frac 1 2 384 (\log n +\log t) (|V_i|-1) \leq 192 (\log n +\log t) |V_i|$. Hence in this case we query all the edges and construct $\nu_i$. In what follows we assume $|V_i| \geq 384 (\log n +\log t)$. To construct $\nu_i$ we sample each edge in $E_i$ with probability $p = \frac {384 (\log n +\log t)}{|V_i|}$. We add a vertex $v$ to $\nu_i$ if and only if at least $\frac {3}{8} 384 (\log n +\log t)$ of its sampled neighbors has weight $\frac {\lambda L_i}{4}$.  The number of sampled edges is $p {|V_i| \choose 2} = \frac 1 2 384 (\log n +\log t) (|V_i|-1) \leq 192 (\log n +\log t) |V_i| $.
	
	We denote the degree of a vertex $v\in V_i$ in $G_i^{\lambda /4}$ by $d_v$. Let $Y_e$ be a binary random variable that is $1$ if we sample $e$ and $0$ otherwise. Let us define $Z_e = Y_e 1_{w_e \geq \lambda L_i/4}$ and $Z_v = \sum_{u\in V_i} Z_{(u,v)}$. Recall that we add $v \in V_i$ to $\nu_i$ if and only if $Z_v \geq \frac {3}{8} 384 (\log n +\log t)$. Notice that, for any $v\in V_i$ we have 
	\begin{align}\label{eq:sketch:makeSeq:E}
		\ex{Z_v} = \ex{\sum_{u\in V_i} Z_{(u,v)}}= \sum_{u\in V_i} \ex{Y_{(u,v)}} 1_{w_{u,v} \geq \lambda L_i/4} = p \sum_{u\in V_i} 1_{w_{u,v} \geq \lambda L_i/4} = p d_v.
	\end{align}
	As $Z_v$ is the sum of independent binary random variables, by the Chernoff bound we have
	\begin{align*}
		\prob{|Z_v - \ex{Z_v}| \geq \frac{384 (\log n + \log t)}{8}} &\leq 2 \exp \Big( - \frac 1 3 \big(\frac{384 (\log n + \log t)}{8 \ex{Z_v}}\big)^2 \ex{Z_v} \Big) &\text{Chernoff bound}\\
		&= 2 \exp \Big( -  \frac{768 (\log n + \log t)^2}{ \ex{Z_v}} \Big) \\
		&= 2 \exp \Big( -  \frac{768 (\log n + \log t)^2}{ p d_v} \Big) & \ex{Z_v} = p d_v\\
		&= 2 \exp \Big( -  \frac{2 (\log n + \log t) |V_i|}{  d_v} \Big) & p =  \frac {384 (\log n +\log t)}{|V_i|}\\
		&\leq 2 \exp \Big( -  2 (\log n + \log t) \Big) & d_v \leq |V_i|\\
		& = \frac{2}{n^2t^2} \leq \frac{1}{n^2t}. &\text{Assuming $t\geq 2$} 
	\end{align*}
	By applying the union bound we have
	\begin{align*}
		\prob{\exists_{v \in V_i} |Z_v - \ex{Z_v}| \geq {48 (\log n + \log t)}} \leq \sum_{v \in V_i} \prob{|Z_v - \ex{Z_v}| \geq {48 (\log n+ \log t)}} = |V_i| \frac{1}{n^2t} \leq \frac{1}{nt}.
	\end{align*}
	This means that with probability at least $1-\frac{1}{nt}$, simultaneously for all vertices $v\in V_i$ we have 
	\begin{align}\label{eq:sketch:beta2alpha:Zv}
		|Z_v - \ex{Z_v}| \leq {48 (\log n + \log t)}.
	\end{align}	
	Next assuming that for all vertices $v\in V_i$ we have $|Z_v - \ex{Z_v}| \leq {48 (\log n + \log t)}$ we show that the $\nu_i$ that we pick satisfy the property $\nu_{i,1/2} \subseteq \nu_i \subseteq \nu_{i,1/4}$. 
	
	Applying Equality \ref{eq:sketch:makeSeq:E} to Inequality \ref{eq:sketch:beta2alpha:Zv} gives us $|Z_v - pd_v| \geq {48 (\log n + \log t)}$. By replacing $p$ with $\frac {384 (\log n +\log t)}{|V_i|}$ and rearranging the inequality we have 
	\begin{align*}
		Z_v  \leq \frac {384 (\log n +\log t)}{|V_i|} d_v + {48 (\log n + \log t)} & < \frac 3 8 {384 (\log n +\log t)} &\text{assuming $d_v < \frac{|V_i|}{4}$}.
	\end{align*}
	This means that if $d_v < \frac{|V_i|}{4}$ we have $Z_v < \frac 3 8 384 (\log n +\log t)$ and hence $v \notin \nu_i$. Therefore we have $ \nu_i \subseteq \nu_{i,1/4}$.
	Similarly, we have
	\begin{align*}
		Z_v  \geq \frac {384 (\log n +\log t)}{|V_i|} d_v - {48 (\log n + \log t)} & \geq \frac 3 8 384 (\log n +\log t) &\text{assuming $d_v \geq \frac{|V_i|}{2}$}.
	\end{align*}
	This means that if $d_v \geq \frac{|V_i|}{2}$ we have $Z_v \geq \frac 3 8 384 (\log n +\log t)$ and hence $d_v \in \nu_i$. Therefore we have $\nu_{i,1/2} \subseteq \nu_i$.	
	
\end{proof}

Finally, for completeness we use the following lemma to find a good upper bound $L$ on $\max_{e\in E} w_e$ in order to start our construction of the graph decomposition (which required an upper bound on the weight of the edges).  
\begin{lemma}\label{lm:sketch:L2}
	For any $\lambda$-metric graph $G=(V,E)$, one can compute a number $L$ such that $\max_{e\in E} w_e \leq L\leq \frac {2}{\lambda} \max_{e\in E} w_e$ using $n-1$ queries.
\end{lemma}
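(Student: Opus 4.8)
The plan is to pick a single pivot vertex and star-query its incident edges. Concretely, I would fix an arbitrary vertex $v \in V$, query the weights $w_{v,u}$ for all $u \in V \setminus \{v\}$ (exactly $n-1$ queries), set $L' := \max_{u \in V \setminus \{v\}} w_{v,u}$, and output $L := \tfrac{2}{\lambda} L'$. The running time is clearly $O(n)$, and the query count is exactly $n-1$ by construction, so the whole burden is in checking that this $L$ satisfies $\max_{e \in E} w_e \leq L \leq \tfrac{2}{\lambda}\max_{e\in E} w_e$.

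For the upper bound, let $W := \max_{e \in E} w_e$. Since $L'$ is a maximum taken over the edges of the star at $v$, which is a subset of $E$, we have $L' \leq W$, and therefore $L = \tfrac{2}{\lambda} L' \leq \tfrac{2}{\lambda} W$. This direction is immediate and requires nothing beyond the definition of $L'$.

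For the lower bound $W \leq L$, let $(a,b)$ be an edge attaining $w_{a,b} = W$. The one real step is to invoke the parametrized triangle inequality on the triple $a, v, b$: $w_{v,a} + w_{v,b} \geq \lambda\, w_{a,b} = \lambda W$. Hence $\max(w_{v,a}, w_{v,b}) \geq \tfrac{\lambda W}{2}$, and since both $w_{v,a}$ and $w_{v,b}$ are among the star weights bounded by $L'$, we get $L' \geq \tfrac{\lambda W}{2}$, so $L = \tfrac{2}{\lambda} L' \geq W$. Combining the two bounds yields $W \leq L \leq \tfrac{2}{\lambda} W$, which is exactly the claim.

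There is essentially no obstacle here. The only subtlety worth flagging is that the star at $v$ may completely miss the heaviest edge $(a,b)$, so we cannot hope to read off $W$ directly; the inflation factor $2/\lambda$ is precisely what compensates, because the parametrized triangle inequality guarantees that at least one of the two star edges touching $a$ and $b$ retains a $\lambda/2$ fraction of $W$. (Taking $\lambda = 1$ recovers the familiar metric fact that the star at any vertex has max weight within a factor $2$ of the diameter.)
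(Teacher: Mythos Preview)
Your proposal is correct and mirrors the paper's own proof essentially line for line: pick an arbitrary pivot $v$, query its $n-1$ incident edges, set $L = \tfrac{2}{\lambda}\max_u w_{v,u}$, and use the parametrized triangle inequality on the triple $(a,v,b)$ for a heaviest edge $(a,b)$ to get the lower bound. The only cosmetic difference is that the paper separately treats the degenerate case $v\in\{a,b\}$, whereas you absorb it into the same inequality; both are fine.
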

\begin{proof}
	Let $v'\in V$ be an arbitrary vertex. We set $L = \frac {2}{\lambda} \max_{u'\in V} w_{u',v'}$. Note that, one can simply query all the $n-1$ neighbors of $v'$ and calculate $L$. Clearly, we have $L = \frac {2}{\lambda} \max_{u'\in V} w_{u',v'} \leq \frac {2}{\lambda} \max_{e\in E} w_e$. Next, we show that $\max_{e\in E} w_e \leq L$.
	
	Let $(u,v)$ be an edge such that $w_{(u,v)}=\max_{e\in E} w_e$. If $v' \in \{u,v\}$ we have $\max_{u'\in V} w_{u',v'} = \max_{e\in E} w_e $ which directly implies $L\leq \frac {2}{\lambda} \max_{e\in E} w_e$ as desired. Otherwise, note that by the $\lambda$-triangle inequality we have $w_{(u,v')}+w_{(v,v')}\geq \lambda w_{(u,v)}$. Thus, we have $ \max(w_{(u,v')} , w_{(v,v')}) \geq \frac{\lambda} {2} w_{(u,v)} $. Therefore, we have
	\begin{align*}
		L = \frac {2}{\lambda} \max_{u'\in V} w_{u',v'} \geq \frac {2}{\lambda} \max(w_{(u,v')} , w_{(v,v')}) \geq w_{(u,v)} = \max_{e\in E} w_e,
	\end{align*}
	as desired.
\end{proof}

	\subsection{Constricting $\Ha$}\label{sec:Ha}
	We know show how to construct what we call $\Ha$, which is derived from our original metric graph $G$.  Recall
$\Ha$ has the property that for each original edge $e$ of weight $w_e$, independently, if $\alpha w_e > 1$, then $\Ha$
contains edge $e$ with weight $\alpha w_e$, and if $\alpha w_e < 1$, then $\Ha$ contains edge $e$ with weight 1 with
probability $\alpha w_e$.  

We define $\overline{w}$ to be the average of the weight of edges in $G$. We use this notion in the following lemma as well as Lemma \ref{lm:sketch:beta2alpha} and Theorem \ref{thm:sketch:mainH}.

The following theorem constructs $\Ha$ using an expected $O(n \log^2 n + n \log^2 \max_{e\in E} \alpha w_e  + \alpha \overline{w} {n \choose 2})$ queries. %New% Notice that $ \alpha \overline{w} {n \choose 2}$ is the expected number of sampled edges.

\begin{theorem}\label{thm:sketch:Ha}
	For any $\alpha$ one can construct $\Ha$ using $O\big(n \log^2 n+ n \log^2 \max_{e\in E} \alpha w_e + \frac{1}{\lambda} \alpha \overline{w} {n \choose 2}+\frac n {\lambda}\big)$ queries in expectation, succeeding with probability at least $1-\frac{1}{n}$. 
\end{theorem}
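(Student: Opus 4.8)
The plan is to build $\Ha$ in three stages. First, use Lemma~\ref{lm:sketch:L2} ($n-1$ queries) to fix an upper bound $L$ with $\max_{e\in E}w_e\le L\le\frac2\lambda\max_{e\in E}w_e$. Second, run the decomposition of the previous subsection for $t=O(\log n+\log\max_{e\in E}\alpha w_e+\log\frac1\lambda)$ levels, chosen so that $\alpha L_{t+1}<1$ and in fact $\alpha L_{t+1}\binom{n}{2}=o(1)$ (this is possible since $L_{t+1}=L/2^{t}$ and $L\le\frac2\lambda\max_{e}w_e$); invoking Lemma~\ref{lm:sketch:makeSeq} once per level produces $\nu_1,\dots,\nu_t$, hence the sets $E_{\nu_1},\dots,E_{\nu_t}$ together with the residual block $E_{t+1}$ (the edges of $G_{t+1}$), and these blocks partition $E_G$. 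By Lemma~\ref{lm:sketch:L}, every edge of $E_{\nu_i}$ has weight at most $L_i$, and every edge of $E_{t+1}$ has weight at most $L_{t+1}:=L_t/2$. Third, rejection‑sample within each block.

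For the rejection step, attach to every edge $e$ an independent uniform $U_e\in[0,1]$. If $e$ lies in the block with bound $L_i$ (reading $L_{t+1}$ for the residual block), mark $e$ a \emph{candidate} and query $w_e$ exactly when $U_e<p_i:=\min(\alpha L_i,1)$; then include $e$ in $\Ha$ with weight $\alpha w_e$ if $\alpha w_e>1$, and with weight $1$ if $\alpha w_e\le 1$ and $U_e<\alpha w_e$. Since $w_e\le L_i$, we have $\min(\alpha w_e,1)\le p_i$, so every edge whose fate is nontrivial is a candidate (hence queried), and in particular every edge with $\alpha w_e>1$ is always a candidate because then $p_i=1$. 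Crucially, $\mathbf{1}[e\in\Ha]$ equals $\mathbf{1}[\alpha w_e\ge 1]+\mathbf{1}[\alpha w_e<1]\,\mathbf{1}[U_e<\alpha w_e]$, a function of $(w_e,U_e)$ only; thus, \emph{conditioned on the event that all $t$ invocations of Lemma~\ref{lm:sketch:makeSeq} succeed} (needed so that the $\nu_i$ form a valid decomposition and Lemma~\ref{lm:sketch:L} applies), the resulting $\Ha$ has exactly the prescribed distribution — independent across edges, weights as required — and the (random) decomposition only governs which edges are queried. A union bound over the $t$ levels of Lemma~\ref{lm:sketch:makeSeq} gives success probability at least $1-\frac1n$.

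For the query count: Lemma~\ref{lm:sketch:L2} costs $n-1$, and the decomposition costs $\sum_{i=1}^{t}192(\log n+\log t)|V_i|\le 192\,t(\log n+\log t)\,n$ in expectation; substituting $t=O(\log n+\log\max_e\alpha w_e+\log\frac1\lambda)$ and using $\log^2(1/\lambda)=O(1/\lambda)$ for $\lambda\le 1$, this is $O\big(n\log^2 n+n\log^2\max_e\alpha w_e+\frac n\lambda\big)$. In the rejection step the expected number of candidates in block $i\le t$ is $p_i|E_{\nu_i}|$; whether $\alpha L_i\le 1$ or $\alpha L_i>1$, one checks $p_i|E_{\nu_i}|\le \alpha L_i|V_i||\nu_i|$, and Lemma~\ref{lm:sketch:costbound} gives $\alpha L_i|V_i||\nu_i|\le\frac{32\alpha}{\lambda}\sum_{e\in E_{\nu_i}}w_e$. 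Because the $E_{\nu_i}$ are disjoint subsets of $E_G$, summing over $i$ telescopes: $\sum_{i\le t}p_i|E_{\nu_i}|\le\frac{32\alpha}{\lambda}\sum_{e\in E_G}w_e=\frac{32\alpha}{\lambda}\,\overline{w}\binom{n}{2}$, while the residual block contributes at most $\alpha L_{t+1}\binom n2=o(1)$. Adding the three stages yields the stated expected bound $O\big(n\log^2 n+n\log^2\max_{e\in E}\alpha w_e+\frac1\lambda\alpha\overline{w}\binom n2+\frac n\lambda\big)$; and since there are only $\binom n2$ edges, the contribution of the probability‑$\le\frac1n$ failure event to the expectation is at most $\binom n2/n=O(n)$, absorbed into the bound.

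The main obstacle is designing the sampler so that three requirements hold at once: the edge‑membership indicators are \emph{genuinely} mutually independent rather than merely independent given the decomposition (handled by forcing $\mathbf{1}[e\in\Ha]$ to be a function of $(w_e,U_e)$ alone); no non‑candidate edge is ever queried, yet edges with $\alpha w_e>1$ are still handled correctly (automatic, since such edges have $p_i=1$); and the candidate queries can be charged to edge weights so as to telescope over the $E_{\nu_i}$ partition (this is exactly what Lemma~\ref{lm:sketch:costbound} buys, and it is where the $\frac1\lambda$ factor and the $\frac n\lambda$ term enter, via the lower bound $\frac\lambda{32}L_i|V_i||\nu_i|\le\sum_{e\in E_{\nu_i}}w_e$ and via the $\log(1/\lambda)$ term in $t$). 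The remaining arithmetic — expanding $t(\log n+\log t)n$ into the stated form — is routine.
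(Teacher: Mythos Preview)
Your proof is correct and follows the same blueprint as the paper: obtain $L$ via Lemma~\ref{lm:sketch:L2}, build the decomposition by iterating Lemma~\ref{lm:sketch:makeSeq}, rejection-sample within each block $E_{\nu_i}$, and charge the candidate queries to edge weights via Lemma~\ref{lm:sketch:costbound}. The only differences are cosmetic: the paper uses a two-stage rejection (query $e$ with probability $\min(\alpha L_i,1)$, then accept with probability $w_e/L_i$) in place of your single-$U_e$ coupling, and it stops the decomposition at $t=\log_2 n+\log_2\max_e\alpha w_e$ and handles the residual $E_t$ by sampling each of its edges with flat probability $\tfrac{2}{\lambda n}$ (this is where its $\tfrac n\lambda$ term comes from), rather than running extra $O(\log n+\log\tfrac1\lambda)$ levels to make the residual negligible as you do.
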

\begin{proof}
	By Lemma \ref{lm:sketch:L2} we find an upper bound $L$ on the weight of the edges, using $n-1$ queries.
	Recall that $t$ is the number of graphs in our decomposition.  We set $t= \log_2 n + \log_2 \max_{e\in E} \alpha w_e$. Given $L_1=L$, by definition we have 
	\begin{align}\label{eq:sketch:Lt}
	L_t = \frac{L}{2^t} &= \frac{L}{n \max_{e\in E} \alpha w_e} &  t= \log_2 n + \log_2 \max_{e\in E} \alpha w_e \nonumber\\
	&\leq  \frac{2}{\lambda \alpha n}. &L\leq \frac{2}{\lambda}\max_{e\in E} w_e
	\end{align}

	Recall that, using Lemma \ref{lm:sketch:makeSeq}, one can construct $\nu_i$ and thus $V_{i+1}$ using $192 (\log n +\log t) n $ queries, succeeding with probability at least $1-\frac{1}{ n t}$. We start with $G_1=G$ and iteratively apply Lemma \ref{lm:sketch:makeSeq} to construct the sequence $G_1,\dots,G_t$ and $\nu_1,\dots,\nu_t$. We apply Lemma \ref{lm:sketch:makeSeq} $t$ times, and hence using a union bound, all of the $G_i$ were successfully constructed with probability at least $1-t\frac 1 {nt} = 1-\frac 1 {n}$. Next, we show how to construct $\Ha$ assuming the sequences $G_1,\dots,G_t$ and $\nu_1,\dots,\nu_t$ are valid. Note that constructing the graph decomposition we use at most $192  (\log n +\log t) n \times t \in O(n \log^2 n + n \log^2 \max_{e\in E} \alpha w_e ) $ queries.
	
	Recall that $E_{\nu_i} = E_i \setminus E_{i+1}$. Also, we have $E_1=E$. Thus, the sequence $E_{\nu_1},\dots,E_{\nu_{t-1}}$ is a decomposition of $E\setminus E_t$.  Also, note that $E_{\nu_i} = \{(u,v)\big| u\in\nu_i \text{ and } v\in V_i\}$. Therefore, given $G_1,\dots,G_t$ and $\nu_1,\dots,\nu_t$ we can decompose the edge set $E$ into $E_{\nu_1},\dots,E_{\nu_t}$.
	
	Let $j$ be the smallest index such that $L_j \leq \frac{1}{\alpha}$. Notice that $j\leq t$ by Inequality \ref{eq:sketch:Lt}. For each $i\in \{1,\dots,j-1\}$ we query each edge $e \in E_{\nu_i}$. 
	If $\alpha w_e > 1$, add edge $e$ with weight $\alpha w_e$ to $E_H$. If $\alpha w_e\leq 1$, we add edge $e$ with weight $1$ to $E_H$ with probability $\alpha w_e$ independently.

	For each $i\in \{j,\dots,t-1\}$ we query each edge $e \in E_{\nu_i}$ with probability $\alpha L_i$. We add a queried edge $e$ to $E_H$ with probability $\frac {w_e}{L_i}$ and withdraw it otherwise. Note that $\alpha L_i \leq \alpha L_j \leq \alpha \frac {1}{\alpha}  = 1$. Also $L_i$ is an upper bound on the edge weights in $E_i \supseteq E_{\nu_i}$, and thus $\frac {w_e}{L_i}\leq 1$. Therefore, the probabilities $\alpha L_i$ and $\frac {w_e}{L_i}$ are valid. Also, notice that we add each edge to $E_H$ with probability $\alpha L_i \times \frac {w_e}{L_i} = \alpha w_e$ as desired. 
	
	For each edge $e\in E_t$ we query $e$ with probability $\frac 2 {\lambda n}$. We add a queried edge $e$ to $E_H$ with probability $\frac { \lambda  n}{2}\alpha w_e$ and withdraw it otherwise. Recall $L_t$ is an upper bound on the edges edges weights in $E_t$, and by Inequality \ref{eq:sketch:Lt} we have $L_t \leq  \frac{2}{\lambda\alpha n}$. Thus, we have
	\begin{align*}
		\frac { \lambda  n}{2}\alpha w_e \leq \frac { \lambda  n}{2}\alpha L_t %&w_e\leq L_t\\
		\leq \frac { \lambda  n}{2}\alpha \frac{2}{\lambda \alpha n}  = 1. %&L_t \leq  \frac{2}{\alpha n}
	\end{align*}
	Therefore, $\frac { \lambda  n}{2}\alpha w_e$ is a valid probability. Again, notice that we add each edge to $E_H$ with probability $\frac 2 {\lambda n} \times \frac { \lambda  n}{2}\alpha w_e= \alpha w_e$ as desired. 
	Next we bound the total number of edges that we query.

	Let $Y_e$ be a random variable that is $1$ if we query $e$ and $0$ otherwise. We bound the expected number of edges that we query by
	\begin{align*}
		\ex{\sum_{e\in E} Y_e} &= \sum_{e\in E} \ex{Y_e} \\
		&= \sum_{i=1}^{j-1} \sum_{e \in E_{\nu_i} }  \ex{Y_e}
		+ \sum_{i=j}^{t-1} \sum_{e \in E_{\nu_i} }  \ex{Y_e}
		+ \sum_{e \in E_{t} }  \ex{Y_e}
		 &\text{$E_{\nu_1},\dots,E_{\nu_{t-1}},E_t$ is a decomposition of $E$}\\
		&= \sum_{i=1}^{j-1} \sum_{e \in E_{\nu_i} }  1
		+ \sum_{i=j}^{t-1} \sum_{e \in E_{\nu_i} }  \alpha L_i
		+ \sum_{e \in E_{t} }  \frac 2 {\lambda n}\\
		&\leq \sum_{i=1}^{t} \sum_{e \in E_{\nu_i} }  \alpha L_i + \sum_{e \in E_{t} }  \frac 2 {\lambda n} & \forall_{i < j} \alpha L_i \geq 1  \\
		&\leq \sum_{i=1}^{t} \sum_{e \in E_{\nu_i} }  \alpha L_i + \frac{n}{\lambda} & |E_t|\leq {n \choose 2}  \\
		&\leq \alpha \sum_{i=1}^{t}  |\nu_i| |V_i| L_i + \frac{n}{\lambda}  &  |E_{\nu_i}| \leq |\nu_i| |V_i| \\
		&\leq \alpha \sum_{i=1}^{t} \sum_{e \in E_{\nu_i} } \frac{32}{\lambda} w_e + \frac{n}{\lambda} &  \text{By Lemma \ref{lm:sketch:costbound}} \\
		&\leq \frac{32}{\lambda} \alpha \sum_{e \in E } w_e + \frac{n}{\lambda} &  \text{$E_{\nu_1},\dots,E_{\nu_t}\subseteq E$ are disjoint} \\
		&= \frac{32}{\lambda} \alpha {n \choose 2} \overline{w} + \frac{n}{\lambda}\\
		& \in O(\frac 1 {\lambda} \alpha {n \choose 2} \overline{w}+ \frac n {\lambda}).
	\end{align*}
	
	We used $O(n \log^2 n + n \log^2 \max_{e\in E} \alpha w_e )$ queries to construct the sequences $G_1,\dots,G_t$ and $\nu_1,\dots,\nu_t$, and used $O(\frac 1 {\lambda} \alpha {n \choose 2} \overline{w}+ \frac n {\lambda})$ queries to construct $\Ha$ based on these sequences. Therefore, in total we used $O\big(n \log^2 n+ n \log^2 \max_{e\in E} \alpha w_e + \frac{1}{\lambda} \alpha \overline{w} {n \choose 2}+\frac n {\lambda}\big)$ queries in expectation.
\end{proof}

	\subsection{Constructing $\H$}\label{sec:Hb}
	The following lemma relates $\i$ with $\alpha$. We use this to construct $\H$ using $\Ha$.

\begin{lemma}\label{lm:sketch:beta2alpha}
	 Let $\gamma\in [1,\infty)$ be an arbitrary number. Let $\frac{1}{\gamma} \overline{w} \leq \hat{w} \leq \overline{w}$, $\alpha = \frac {\i}{  {n \choose 2}\hat w}$, and $\Ha = (V,E_H)$. We have 
	$$\i \leq \ex{\sum_{e\in E_H} w'_e} \leq  {\gamma} \i, $$ where $w'_e$ is the weight of $e$ in $\Ha$.
\end{lemma}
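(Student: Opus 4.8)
The plan is to compute $\ex{\sum_{e \in E_H} w'_e}$ \emph{exactly} using linearity of expectation together with the defining property of $\Ha$ from Theorem~\ref{thm:sketch:Ha}, and then to translate the hypothesis $\tfrac1\gamma \overline{w} \le \hat w \le \overline{w}$ into the claimed two-sided bound.

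First I would record that, by the construction of $\Ha$ in Theorem~\ref{thm:sketch:Ha}, each edge $e \in E$ is included in $E_H$ independently, and in \emph{both} regimes its expected contribution to the total weight of $\Ha$ is exactly $\alpha w_e$: if $\alpha w_e > 1$ then $e$ appears with deterministic weight $\alpha w_e$, while if $\alpha w_e \le 1$ then $e$ appears with weight $1$ with probability $\alpha w_e$, contributing $1\cdot \alpha w_e$ in expectation. The one subtlety is to check that $\ex{w'_e \, 1_{e\in E_H}} = \alpha w_e$ really holds for every edge under the construction --- in particular for the edges handled by the later sampling phases, where an edge is first queried with probability $\alpha L_i$, resp.\ $\tfrac{2}{\lambda n}$, and then retained with a conditional probability so that the composed inclusion probability is $\alpha w_e$; there one also uses $\alpha w_e \le \alpha L_i \le 1$ so that these edges indeed lie in the second regime. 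All of this was verified inside the proof of Theorem~\ref{thm:sketch:Ha}, so here it can simply be invoked.

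Next, summing over the disjoint decomposition $E = E_{\nu_1} \cup \dots \cup E_{\nu_{t-1}} \cup E_t$ and using $\overline{w} = {n \choose 2}^{-1}\sum_{e\in E} w_e$, linearity of expectation gives
\[
\ex{\sum_{e\in E_H} w'_e} \;=\; \sum_{e\in E} \alpha w_e \;=\; \alpha {n \choose 2}\,\overline{w}.
\]
Substituting $\alpha = \i / \big({n\choose 2}\,\hat w\big)$ yields $\ex{\sum_{e\in E_H} w'_e} = \i\,\overline{w}/\hat w$. Finally, the hypothesis $\tfrac1\gamma\overline{w}\le \hat w\le \overline{w}$ is exactly $1 \le \overline{w}/\hat w \le \gamma$, so $\i \le \i\,\overline{w}/\hat w \le \gamma\i$, which is the statement. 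There is essentially no obstacle beyond the bookkeeping in the parenthetical remark, namely confirming that the multi-phase sampling in Theorem~\ref{thm:sketch:Ha} composes to the clean per-edge expectation $\alpha w_e$; everything else is a one-line substitution.
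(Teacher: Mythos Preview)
Your proof is correct and follows essentially the same argument as the paper: compute $\ex{\sum_{e\in E_H} w'_e} = \sum_{e\in E}\alpha w_e = \alpha{n\choose 2}\overline{w} = \i\,\overline{w}/\hat w$ via linearity of expectation and the defining property of $\Ha$, then apply the two-sided bound on $\hat w$. The only minor difference is that the paper invokes $\ex{w'_e}=\alpha w_e$ directly from the \emph{definition} of $\Ha$ (given at the start of Section~\ref{sec:Ha}) rather than routing through the multi-phase construction inside Theorem~\ref{thm:sketch:Ha}, so your parenthetical verification is unnecessary but harmless.
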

\begin{proof}
	We have 
	\begin{align*}
	\ex{\sum_{e\in E_H} w'_e} &= \sum_{e\in E} \ex{w'_e} =  \sum_{e\in E} \alpha w_e  \\
	&=   \sum_{e\in E} \frac {\i}{  {n \choose 2}\hat w} w_e = \frac{\i}{ \hat w }\frac {\sum_{e\in E} w_e}{{n \choose 2}} &\text{Definition of $\alpha$}\\
	&= \frac{ \overline{w}}{\hat w } \i. &\text{Definition of $\overline{w}$}
	\end{align*}
	This together with $\frac 1 {\gamma} \overline{w} \leq \hat{w}$ gives us 
	\begin{align*}
	\ex{\sum_{e\in E_H} w'_e} =  \frac{\overline{w}}{ \hat w} \i \leq  \gamma \i.
	\end{align*}
	Similarly, by applying $\hat{w} \leq \overline{w}$ we have 
	\begin{align*}
	\ex{\sum_{e\in E_H} w'_e} =  \frac{\overline{w}}{ \hat w} \i \geq  \i.
	\end{align*}

\end{proof}

Lemma \ref{lm:sketch:w} shows how to estimate $\overline{w}$. We use this lemma together with Lemma \ref{lm:sketch:beta2alpha} to find a proper $\alpha$ based on the desired $\beta$ to construct $\H$. We note that in a metric space, i.e. $\lambda=1$, the following lemma gives a $1-\epsilon$ approximation of the average weight of the edges using $\tilde{O}(\frac n {\epsilon^2})$ queries, while the previous algorithm of Indyk~\cite{indyk1999sublinear} uses $O(\frac n {\epsilon^{3.5}})$ queries\footnote{Note that the algorithm in \cite{indyk1999sublinear} works with a constant probability while our algorithm works with probability $1-\frac 1 n$. The previous algorithm requires an extra logarithmic factor to work with probability $1-\frac 1 n$.}. In the next section, using $\H$ we improve this lemma and estimate the average weight of the edges using only $\tilde{O}(n+\frac{1}{\epsilon^2})$ queries.

\begin{lemma}\label{lm:sketch:w}
	For $\epsilon \in (0,1]$, one can find an estimator $\hat{w}$ of the average weight of the edges $\overline{w}$ such that $(1-\epsilon) \overline{w} \leq \hat{w} \leq (1+\epsilon) \overline{w}$, with probability $1-\frac 2 n$, using $O \big(  n \log^2 n +  \frac{n \log n }{\epsilon^2\lambda}\big) \in \tilde{O}(\frac n {\epsilon^2 \lambda})$ queries. % $O(\frac{3n \log (2n) }{\epsilon^2\lambda} )$
\end{lemma}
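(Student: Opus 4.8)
The plan is to estimate $\overline{w}$ by cutting the edge set along the decomposition $E = E_{\nu_1}\cup\cdots\cup E_{\nu_{t-1}}\cup E_t$ and estimating the contribution of each layer $E_{\nu_i}$ separately. The point that makes this work is that, by Lemma~\ref{lm:sketch:L}, every edge of $E_{\nu_i}$ has weight at most $L_i$, while by Lemma~\ref{lm:sketch:costbound} (together with $|E_{\nu_i}|\le|V_i||\nu_i|$) the average weight of the edges of $E_{\nu_i}$ is at least $\tfrac{\lambda}{32}L_i$; so within a single layer the ratio of the maximum edge weight to the average edge weight is at most $32/\lambda$, which is exactly the regime in which uniform sampling has small relative variance. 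Thus for each $i<t$ I would set $S_i:=\sum_{e\in E_{\nu_i}}w_e$ and estimate it by sampling $k=O\!\big(\tfrac{\log(nt)}{\epsilon^2\lambda}\big)$ edges of $E_{\nu_i}$ uniformly at random, querying their weights, averaging, and multiplying by $|E_{\nu_i}|$ (or, if $|E_{\nu_i}|<k$, just querying all of $E_{\nu_i}$ and taking $\hat S_i=S_i$ exactly). Rescaling the sampled weights by $L_i$ so that they lie in $[0,1]$ with mean at least $\lambda/32$, a Chernoff bound shows that for a suitable constant in $k$ the estimate satisfies $\hat S_i\in(1\pm\epsilon)S_i$ except with probability at most $\tfrac{1}{nt}$.

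Concretely, I would first invoke Lemma~\ref{lm:sketch:L2} ($n-1$ queries) to obtain $L$ with $\max_e w_e\le L\le\tfrac{2}{\lambda}\max_e w_e$, and fix $t:=\lceil\log_2(n^2/(\epsilon\lambda))\rceil$, so $L_t=L/2^t\le L\epsilon\lambda/n^2$. Then I would run Lemma~\ref{lm:sketch:makeSeq} for $i=1,\dots,t-1$ to build $G_1\supseteq\cdots\supseteq G_t$ and $\nu_1,\dots,\nu_{t-1}$ using $O(tn(\log n+\log t))$ expected queries, the whole decomposition being valid with probability at least $1-\tfrac1n$ by a union bound. Once it is in hand we know all the vertex sets, hence know each $E_{\nu_i}$ (and its cardinality) exactly with no further queries, and by the earlier remark on sampling a known edge set the per-layer sampling above runs in time proportional to its number of queries. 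The output is $\hat w:=\frac{1}{\binom{n}{2}}\sum_{i=1}^{t-1}\hat S_i$; a second union bound over the $t-1$ layers gives all $\hat S_i\in(1\pm\epsilon)S_i$ with probability at least $1-\tfrac1n$ conditioned on a valid decomposition, so the whole procedure succeeds with probability at least $1-\tfrac2n$.

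On the success event, summing the per-layer bounds gives $\sum_{i<t}\hat S_i\in(1\pm\epsilon)\sum_{i<t}S_i=(1\pm\epsilon)\big(\sum_e w_e-\sum_{e\in E_t}w_e\big)$, so it only remains to argue that the bottom layer is negligible, which is crude: $\sum_{e\in E_t}w_e\le|E_t|L_t\le\binom{n}{2}L_t$, and since $\sum_e w_e\ge\max_e w_e\ge\tfrac{\lambda}{2}L$ and $L_t\le L\epsilon\lambda/n^2$, the choice of $t$ forces $\binom{n}{2}L_t\le\epsilon\sum_e w_e$. Hence $\sum_{i<t}\hat S_i\in[(1-\epsilon)^2,(1+\epsilon)]\sum_e w_e$, i.e.\ $\hat w\in[(1-2\epsilon)\overline{w},(1+\epsilon)\overline{w}]$, and running the whole procedure with $\epsilon/2$ in place of $\epsilon$ yields the stated $(1\pm\epsilon)$ guarantee. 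For the query count, the decomposition costs $(n-1)+O(tn(\log n+\log t))$ and the estimation costs $\sum_{i<t}k=O(tk)=O\!\big(\tfrac{t\log(nt)}{\epsilon^2\lambda}\big)$; with $t=O(\log(n/(\epsilon\lambda)))$ this is $O(n\log^2 n)+O\!\big(\tfrac{\log^2 n}{\epsilon^2\lambda}\big)$, which is within the claimed $O\!\big(n\log^2 n+\tfrac{n\log n}{\epsilon^2\lambda}\big)\subseteq\tilde{O}(\tfrac{n}{\epsilon^2\lambda})$.

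The main obstacle I expect is not any single calculation but getting the right viewpoint: recognizing that Lemma~\ref{lm:sketch:costbound} is precisely a bound on the ``maximum-to-average weight'' ratio inside each layer $E_{\nu_i}$, so that a fixed $O(1/(\epsilon^2\lambda))$ samples per layer suffice, and that combining multiplicative $(1\pm\epsilon)$ estimates of the $S_i$ yields a multiplicative $(1\pm\epsilon)$ estimate of their sum with no extra factor of $t$. The only real loose end is the bottom layer $E_t$, where Lemma~\ref{lm:sketch:costbound} gives no average-weight lower bound; the remedy is to push the decomposition deep enough (costing only logarithmic factors in $t$) that $E_t$ contributes at most an $\epsilon$-fraction of $\sum_e w_e$, using only the trivial inequality $\sum_e w_e\ge\max_e w_e$.
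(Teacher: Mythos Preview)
Your proof is correct, and it takes a genuinely different route from the paper's. The paper proceeds by a bootstrap: it first obtains a crude estimate $\hat w'$ with $\frac{1}{2n}\overline{w}\le\hat w'\le\overline{w}$ by querying all edges incident to $\lceil 1/\lambda\rceil$ fixed vertices (using the $\lambda$-triangle inequality to relate a star to the full sum), then uses $\hat w'$ to set $\alpha=\frac{3\log(2n)}{\epsilon^2\binom{n}{2}\hat w'}$, builds the linear sample $H^{\alpha}$ via Theorem~\ref{thm:sketch:Ha}, and finally reads off $\hat w=\frac{1}{\alpha\binom{n}{2}}\sum_{e\in E_H}w'_e$; a single Chernoff bound on the edges with $\alpha w_e\le1$ gives the concentration. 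In other words, the paper treats Lemma~\ref{lm:sketch:w} as a first exercise of the $H^{\alpha}$ machinery, which is natural since the next theorem reuses exactly that machinery.

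Your argument bypasses $H^{\alpha}$ entirely: you build the decomposition, then do plain uniform sampling \emph{within each layer} $E_{\nu_i}$, using Lemma~\ref{lm:sketch:costbound} as a ``max-to-average $\le 32/\lambda$'' statement so that $O(\log(nt)/(\epsilon^2\lambda))$ samples per layer suffice, and you handle the residual layer $E_t$ by making $t$ large enough that it is an $\epsilon$-fraction of the total. This is more elementary (no bootstrap, no $H^{\alpha}$) and in fact yields a sharper $\epsilon$-dependent term, $O(\log^2 n/(\epsilon^2\lambda))$ rather than the paper's $O(n\log n/(\epsilon^2\lambda))$; on the other hand, the paper's proof dovetails directly with the construction of $H_\beta$ that follows, whereas yours is a standalone estimator. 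Two small points to make explicit when you write it out: (i) the multiplicative Chernoff bound you invoke must be the version for $[0,1]$-valued (not binary) summands, since the rescaled weights $w_e/L_i$ are not indicators; and (ii) when $\nu_i=\emptyset$ the average-weight lower bound from Lemma~\ref{lm:sketch:costbound} is vacuous, but then $E_{\nu_i}=\emptyset$ and $\hat S_i=S_i=0$ trivially, which your ``$|E_{\nu_i}|<k$'' clause already covers.
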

\begin{proof}
	We first use $O(\frac{n}{\lambda})$ queries to provide an estimate $\hat{w}'$ such that $\frac{1}{2n} \overline{w} \leq \hat{w}' \leq \overline{w}$. Next we set $\alpha = \frac {\i}{  {n \choose 2}\hat w}$ and construct a corresponding $\Ha$. We use Lemma \ref{lm:sketch:beta2alpha} and Theorem \ref{thm:sketch:Ha} to lower bound the total weight of sampled edges by $\frac{ 3\log (2n) }{\epsilon^2}$ and upper bound the number of queries by $O \big(  n \log^2 n +  \frac{n \log n }{\epsilon^2\lambda}\big)$. At the end we use the lower bound on the total weight of sampled edges to show that the average weight of edges in $\Ha$ is concentrated around $\overline{w}$. 
	
	Let $v$ be an arbitrary vertex. We have 
	\begin{align*}
	\sum_{u \in V\setminus \{v\}} w_{u,v} &=
	\frac 1 n \sum_{u \in V\setminus \{v\}} w_{u,v} 
	+ \frac {n-1} n  \sum_{u \in V\setminus \{v\}}  w_{u,v} \\
	&\geq 		
	\frac 1 n \sum_{u \in V\setminus \{v\}} w_{u,v} 
	+ \frac {n-1} n  \frac{\lambda}{n-2} \sum_{u,u' \in V\setminus \{v\}}  w_{u,u'} &\text{$\lambda$-triangle inequality}\\
	&> 
	\frac{\lambda}{n} \sum_{e \in E}  w_{e}\\
	\end{align*}
	Hence for a set $S\subseteq V$ with $|S|=\lceil \frac 1 {\lambda} \rceil$ we have
	\begin{align*}
	\sum_{v\in S} \sum_{u \in V\setminus \{v\}} w_{u,v} > \sum_{v\in S} \frac{\lambda}{n} \sum_{e \in E}  w_{e} \geq \frac 1 n \sum_{e \in E}  w_{e}.
	\end{align*}
	On the other hand every edge appears at most twice in $\sum_{v\in S} \sum_{u \in V\setminus \{v\}} w_{u,v}$ and hence we have $\sum_{v\in S} \sum_{u \in V\setminus \{v\}} w_{u,v} \leq 2 \sum_{e \in E}  w_{e}$. Therefore, by setting $\hat{w}'= \frac{1}{2{n \choose 2}}\sum_{v\in S} \sum_{u \in V\setminus \{v\}} w_{u,v}$ we have $\frac 1 {2n} \overline{w} \leq \hat{w}' \leq \overline{w}$. Hence, one can query at most $\frac{n}{\lceil \lambda \rceil}$ edges to find a number $\hat{w}'$ such that $\frac 1 {2n} \overline{w} \leq \hat{w}' \leq \overline{w}$.
	
	Next, we set $\alpha=\frac{ 3\log (2n) }{\epsilon^2 {n \choose 2}\hat{w'}}$. By Lemma \ref{lm:sketch:beta2alpha} we have 
	\begin{align}\label{eq:sketch:Wedges}
		\frac{ 3\log (2n) }{\epsilon^2} \leq \ex{\sum_{e\in E_H} w'_e} \leq  \frac{6n \log (2n) }{\epsilon^2}, 
	\end{align}	
	where $w'_e$ is the weight of $e$ in $\Ha$. By Lemma \ref{thm:sketch:Ha}, with probability $1-\frac 1 n$, the expected number of queries we need to construct $\Ha$ is at most
	\begin{align*}
		O \Big( n \log^2 n+ n \log^2 \max_{e\in E} \alpha w_e + \frac{1}{\lambda} \alpha \overline{w} {n \choose 2}+\frac n {\lambda} &\Big) \in \\ 
		O \Big(  n \log^2 n+ n \log^2 \Big(\frac{3 n \log (2n) }{\epsilon^2}\Big) + \frac{1}{\lambda} \frac{6n \log(2 n) }{\epsilon^2}+\frac n {\lambda}&\Big) \in \\
		O \Big(  n \log^2 n +  \frac{n \log n }{\epsilon^2\lambda}&\Big). &\text{Assuming $\epsilon \geq \frac 1 {n}$ w.l.o.g.}
	\end{align*} 
	
	Now we set $\hat{w} = \frac 1 {{n \choose 2}} \sum_{e\in E_H} \frac{w'_e}{\alpha}$, where $w'_e$ is the weight of $e$ in $\Ha$. To complete the proof we show that $(1-\epsilon) \overline{w} \leq \hat{w} \leq (1+\epsilon) \overline{w}$, with probability $1-\frac 1 n$. Notice that 
	\begin{align}\label{eq:sketch:EWhat}
		\ex{\hat{w}} = \ex{\frac 1 {{n \choose 2}} \sum_{e\in E_H} \frac{w'_e}{\alpha}} = {\frac 1 {{n \choose 2}} \sum_{e\in E} w_e} = \overline{w}.
	\end{align} 
	Let $\chi_e$ be a binary random variable that indicates whether $\chi_e$ is sampled in $\Ha$ or not. Note that
	\begin{align}
		 \hat{w} - \ex{\hat{w}} 
		&= \frac 1 {{n \choose 2}} \sum_{e\in E_H} \frac{w'_e}{\alpha} - \frac 1 {{n \choose 2}} \sum_{e\in E} \ex{\frac{w'_e}{\alpha}}\nonumber\\
		& = \frac 1 {{n \choose 2}{\alpha}} \Big( \sum_{e\in E_H} {w'_e} - \sum_{e\in E} \ex{w'_e} \Big)\nonumber\\
		& = \frac 1 {{n \choose 2}{\alpha}} \Big( \sum_{ w_e \leq \frac 1 {\alpha} } \chi_e - \sum_{w_e\leq \frac 1 {\alpha}} \ex{w'_e} \Big). &w'_e = \ex{w'_e}\text{ when } w_e>\frac 1 {\alpha}  \label{eq:sketch:^w-Ew}
	\end{align}
	Therefore, we have
	\begin{align*}
		\prob{|\hat{w} - \overline{w}| \leq \epsilon \overline{w} } &= \prob{|\hat{w} - \ex{\hat{w}}| \leq \epsilon \overline{w}} &\text{By Equality \ref{eq:sketch:EWhat}}\\
		&= \prob{\Big|\frac 1 {{n \choose 2}{\alpha}} \Big( \sum_{ w_e \leq \frac 1 {\alpha} } \chi_e - \sum_{w_e\leq \frac 1 {\alpha}} \ex{w'_e} \Big)\Big| \leq \epsilon \overline{w} } &\text{By Equality \ref{eq:sketch:^w-Ew}} \\	
		&= \prob{\Big| \sum_{ w_e \leq \frac 1 {\alpha} } \chi_e - \sum_{w_e\leq \frac 1 {\alpha}} \ex{w'_e} \Big| \leq \epsilon \alpha {n \choose 2} \overline{w} } \\
		&\leq 2\exp{\Big(-\frac 1 3  \big(\frac{\epsilon \alpha {n \choose 2} \overline{w}}{\sum_{w_e\leq \frac 1 {\alpha}} \ex{w'_e}}\big)^2 \sum_{w_e\leq \frac 1 {\alpha}} \ex{w'_e}\Big)} &\text{Chernoff Bound}\\
		&\leq 2\exp{\Big(-\frac 1 3  \frac{\epsilon^2 \alpha^2 {n \choose 2}^2 \overline{w}^2}{\sum_{w_e\leq \frac 1 {\alpha}} \ex{w'_e}} \Big)}\\			
		&= 2\exp{\Big(-\frac 1 3  \frac{\epsilon^2 \big(\sum_{e \in E} \ex{w'_e} \big)^2}{\sum_{w_e\leq \frac 1 {\alpha}} \ex{w'_e}} \Big)}\\			
		&\leq 2\exp{\Big(-\frac 1 3  {\epsilon^2 \sum_{e \in E} \ex{w'_e}} \Big)}\\			
		&\leq 2\exp{\Big(-\frac 1 3  {\epsilon^2 \frac{3\log (2n) }{\epsilon^2}} \Big)} & \text{By Inequality \ref{eq:sketch:Wedges}}\\		
		&= 2\exp{\big(-  { {\log (2 n) }} \big)}	= \frac 1 n.
	\end{align*}
	This means that with probability $1-\frac 1 n$ we have $(1-\epsilon) \overline{w} \leq \hat{w} \leq (1+\epsilon) \overline{w}$ as desired.
\end{proof}

The following theorem constructs $\H$ using $\tilde{O}(\frac{n+\i}{\lambda})$ queries, with high probability.

\begin{theorem}\label{thm:sketch:mainH}
	For any $\i$ one can construct $\H$ using expected $O(n\log^2 n + n\log^2 \i+\frac{\i}{\lambda}+\frac{n\log n}{\lambda})\in \tilde{O}(\frac{n+\i}{\lambda})$ expected queries, with probability of success at least $1-\frac 3 n$.
\end{theorem}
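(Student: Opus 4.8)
The plan is to bootstrap from the machinery already built: Theorem~\ref{thm:sketch:Ha} constructs, for \emph{any} prescribed factor $\alpha$, a graph $\Ha$ satisfying exactly the first bullet in the definition of $\H$, and Lemma~\ref{lm:sketch:beta2alpha} tells us that the right choice of $\alpha$ is $\alpha = \i/({n\choose 2}\hat w)$ whenever $\hat w$ satisfies $\tfrac1\gamma\overline{w}\le\hat w\le\overline{w}$, in which case $\i\le\ex{\sum_{e\in E_H}w'_e}\le\gamma\i$. So the whole task reduces to computing such a $\hat w$ with $\gamma=2$, i.e. $\tfrac12\overline{w}\le\hat w\le\overline{w}$, cheaply and with high probability.

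First I would call Lemma~\ref{lm:sketch:w} with the fixed accuracy $\epsilon=\tfrac13$. This costs $O\big(n\log^2 n+\tfrac{n\log n}{\lambda}\big)$ expected queries and, with probability at least $1-\tfrac2n$, returns $\hat w_0$ with $\tfrac23\overline{w}\le\hat w_0\le\tfrac43\overline{w}$. I then rescale, setting $\hat w:=\tfrac34\hat w_0$, which turns the two-sided guarantee into the one-sided sandwich $\tfrac12\overline{w}\le\hat w\le\overline{w}$ that Lemma~\ref{lm:sketch:beta2alpha} wants. Next, set $\alpha:=\i/({n\choose 2}\hat w)$ and feed this $\alpha$ into Theorem~\ref{thm:sketch:Ha} to obtain $\Ha$, which we declare to be $\H$; by construction $\H$ meets the first bullet, the construction succeeds with probability at least $1-\tfrac1n$, and Lemma~\ref{lm:sketch:beta2alpha} with $\gamma=2$ delivers $\i\le\ex{\sum_{e\in E_H}w'_e}\le 2\i$, the second bullet.

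What is left is to bound the queries. Theorem~\ref{thm:sketch:Ha} charges $O\big(n\log^2 n+n\log^2\max_{e\in E}\alpha w_e+\tfrac1\lambda\alpha\overline{w}{n\choose 2}+\tfrac n\lambda\big)$, and the two $\alpha$-dependent terms simplify using $\hat w\ge\overline{w}/2$: then $\alpha\le 2\i/({n\choose 2}\overline{w})=2\i/\sum_{e\in E}w_e$, so $\alpha w_e\le 2\i$ for every edge $e$ (hence $n\log^2\max_e\alpha w_e = O(n\log^2 n+n\log^2\i)$), and $\tfrac1\lambda\alpha\overline{w}{n\choose 2}=\tfrac{\overline{w}}{\lambda\hat w}\i\le\tfrac{2\i}{\lambda}$. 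Combining with the queries of Lemma~\ref{lm:sketch:w} (here $\epsilon$ is a constant, so $\tfrac{n\log n}{\epsilon^2\lambda}=O(\tfrac{n\log n}{\lambda})$), the total is $O\big(n\log^2 n+n\log^2\i+\tfrac\i\lambda+\tfrac{n\log n}{\lambda}\big)$, which is $\tilde O\big(\tfrac{n+\i}{\lambda}\big)$ since $\lambda\le1$. A union bound over the two bad events --- $\tfrac2n$ from Lemma~\ref{lm:sketch:w} and $\tfrac1n$ from Theorem~\ref{thm:sketch:Ha} --- gives overall success probability at least $1-\tfrac3n$.

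Almost every step here is bookkeeping on top of earlier results; the only points that need a little care are (i) converting the two-sided estimate of $\overline{w}$ from Lemma~\ref{lm:sketch:w} into the one-sided sandwich required by Lemma~\ref{lm:sketch:beta2alpha} (handled by the $\tfrac34$ rescaling), and (ii) the bound $\max_{e}\alpha w_e\le 2\i$, which is exactly what keeps the $n\log^2(\cdot)$ term of Theorem~\ref{thm:sketch:Ha} from blowing up. (If one instead wants $\ex{\sum_{e\in E_H}w'_e}\le(1+\gamma)\i$ for an arbitrary $\gamma$, as noted in the footnote, simply run Lemma~\ref{lm:sketch:w} with an appropriately smaller constant accuracy; only the hidden constants in the query bound change.)
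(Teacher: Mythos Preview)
Your proposal is correct and follows essentially the same approach as the paper: estimate $\overline{w}$ via Lemma~\ref{lm:sketch:w} with constant accuracy, set $\alpha=\i/({n\choose 2}\hat w)$, invoke Theorem~\ref{thm:sketch:Ha}, and bound the two $\alpha$-dependent query terms via $\hat w\ge\overline{w}/2$. In fact you are slightly more careful than the paper, which simply asserts the one-sided bound $\tfrac12\overline{w}\le\hat w\le\overline{w}$ without spelling out the $\tfrac34$ rescaling you supply.
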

\begin{proof}
	First, using Lemma \ref{lm:sketch:w} we find an estimator $\hat{w}$ of the average weight of the edges $\overline{w}$ such that $\frac 1 2 \overline{w} \leq \hat{w} \leq \overline{w}$, with probability $1-\frac 2 {n}$, using $O(n \log^2 n+\frac{n \log n}{\lambda})$ expected queries. Lemma \ref{lm:sketch:beta2alpha} says that by picking $\alpha = \frac {\i}{ {n \choose 2}\hat w}$, we have $\i \leq \ex{\sum_{e\in E_H} w'_e} \leq 2\i $, where $w'_e$ is the weight of $e$ in $\Ha=(V,E_H)$.
	\footnote{Note that, for any $\eta \in (0,1]$, one can use lemma \ref{lm:sketch:w} to find $\hat{w}$ such that $\frac 1 {1+\eta} \overline{w} \leq \hat{w} \leq \overline{w}$, with probability $1-\frac 2 {n}$, using $O(n \log^2 n+\frac{n  \log n}{\eta^2\lambda})$ expected queries, and then apply Lemma \ref{lm:sketch:beta2alpha} to show that by picking $\alpha = \frac {\i}{ {n \choose 2}\hat w}$, we have $\i \leq \ex{\sum_{e\in E_H} w'_e} \leq (1+\eta)\i $. We use $\eta = 1$ throughout for convenience.} 
	By Theorem \ref{thm:sketch:Ha} one can construct $\Ha$ using $O(n \log^2 n+ n \log^2 \max_{e\in E} \alpha w_e  + \frac 1 {\lambda}\alpha \overline{w} {n \choose 2}+\frac n {\lambda})$ expected queries, with probability $1-\frac {1}{n}$. Note that, we have
	\begin{align*}
	 	\max_{e\in E} \alpha w_e &= \alpha \max_{e\in E} w_e \\
	 	&\leq \alpha {{n \choose 2} \overline{w}} & \overline{w} \geq \frac{ \max_{e\in E} w_e}{{n \choose 2}}\\
	   &= \frac {4\i}{ 3 {n \choose 2}\hat w} {{n \choose 2} \overline{w}} & \alpha = \frac {4\i}{ 3 {n \choose 2}\hat w}\\
	   & \leq  {2\i}   &\frac 2 3 \overline{w} \leq \hat{w}\\
	\end{align*}
	Also, we have
	\begin{align*}
	\alpha \overline{w} n^2 &=  \frac {4\i}{ 3 {n \choose 2}\hat w} \overline{w} {n \choose 2} &\text{By $\alpha=\frac {4\i}{ 3 {n \choose 2}\hat w} $}\\
	&=\frac {4\overline{w}}{ 3 \hat w} \i \\
	&\leq 2 \i. &\text{By $\frac 2 3 \overline{w} \leq \hat{w}$}
	\end{align*}
	By $\alpha \overline{w} n^2 \leq 2\i$ and $\max_{e\in E} \alpha w_e\leq 2\i$ we have
	\begin{align*}
	n \log^2 n+ n \log^2 \max_{e\in E} \alpha w_e  + \frac 1 {\lambda}\alpha \overline{w} {n \choose 2} + \frac n {\lambda}
	&\leq n \log^2 n + n log^2 ({2\i}) + \frac{2\i}{\lambda} + \frac n {\lambda} \\
	&\in O(n\log^2 n + n\log^2 \i + \frac{\i+n}{\lambda}).
	\end{align*}

	Therefore, the total number of expected queries is ${O}(\frac{n \log n}{\lambda}+n\log^2 n + n\log^2 \i+\frac{\i+n}{\lambda}) \in \tilde{O}(\frac{\i+n}{\lambda})$. We properly estimate $\hat{w}$ with probability at least $1-\frac{2}{n}$ and Theorem \ref{thm:sketch:Ha} holds with probability at least $1-\frac{1}{n}$. Therefore, by the union bound, the statement of this theorem holds with probability at least $1-  \frac{3}{n}$.
\end{proof}

\section{Applications of Linear Sampling}\label{sec:app}
In this section we use the sketch $\H$ to develop approximation algorithms for densest subgraph, maximum $k$-hypermatching, and maximum cut, as well as estimating the average distance.
We first define the problems and provide relevant notation. 
The densest subgraph of a graph $G=(V,E)$ is an induced subgraph of $G$, indicated by its set of vertices $S^*\subseteq V$, that maximizes $\frac{\sum_{ u,v \in S^* } w_{u,v} }{|S^*|}$. We indicate the value of the densest subgraph by $\optd$.
The max cut of a graph $G=(V,E)$ is a decomposition of the vertex set of $G$ into two sets $S^*, V\setminus S^* \subseteq V$, that maximizes $\sum_{ u \in S^*, v\in V\setminus S^* }  w_{u,v}$. We indicate the value of the max cut by $\optc$.
A $k$-hypermatching of a set of points $V$ is a decomposition of $V$ into a collection of $n/k$ sets $\S^*= \{S_1^*,S_2^*,\dots,S_{n/k}^*\}$, each of size $k$. One can also see this as covering a graph $G=(V,E)$ with clusters of size $k$. A maximum $k$-hypermatching is a $k$-hypermatching that maximizes $\sum_{i=1}^{n/k}\sum_{ u,v \in S_i^* }  w_{u,v}$. We use $\optm$ to indicate the value of the maximum $k$-hypermatching. 

For a sketch $\H=(V,E_H)$ we define random variables $X_{u,v}$ and $Y_{u,v}$. $Y_{u,v}$ is $0$ if $(u,v) \notin E_H$, and is equal to the weight of the edge $(u,v)$ in $\H$ otherwise. $X_{u,v}=1$ if $Y_{u,v}=1$ and $X_{u,v}=0$ otherwise. Recall that if we sample an edge $e$ with $\alpha w_e \leq 1$, weight of $e$ in $\H$ is $1$. Note that $\E[Y_{u,v}]= \alpha w_{u,v}$.

We first start with a simple application, using $\H$ to estimate the average weight of the edges using $\i=O(\frac{\log n }{\eps^2})$. This together with Theorem \ref{thm:sketch:mainH} allows us to find the average weight of the edges in a $\lambda$-metric space with probability $1-\frac{4}{n}$ using $O(n\log^2 n + \frac{\log n}{\lambda \eps^2}+\frac{n\log n}{\lambda})\in \tilde{O}(\frac{n+  1 /{\eps^2}}{\lambda})$ expected queries.\footnote{Again, we emphasize that we can turn these results into bounds with a corresponding upper bound on the queries, with a small increase in the failure probability.}   In particular for a metric space this gives a $1-\eps$ approximation of the average weight of the edges using $\tilde{O}(n + \frac 1 {\epsilon^2})$ queries.

%{\bf MM:  Check the following statement.}  {\bf HE: Seems good}
In what follows (throughout this section), when considering the failure probability of the approximation algorithms, we assume that $\H$ has been constructed successfully.  That is, we provide for a failure probability in this stage of at most $1/n$, which when combined with Theorem \ref{thm:sketch:mainH} allows for our success probability of at least $1-\frac{4}{n}$ overall.

\begin{theorem}
	Take $\i = \frac{ 3\log (2n) }{\epsilon^2}$. We have
	$$ (1-\eps)\overline{w} \leq \frac {1}{\alpha {n \choose 2}}\sum_{e \in E} Y_{e} \leq (1+\eps) \overline{w},$$
	with probability at least $1-\frac 1 n$.
\end{theorem}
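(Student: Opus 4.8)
The plan is to observe that $\frac{1}{\alpha{n \choose 2}}\sum_{e\in E}Y_e$ is an unbiased estimator of $\overline{w}$ and to establish concentration by a Chernoff bound; this is essentially the concentration step already carried out inside the proof of Lemma~\ref{lm:sketch:w}, only now with $\i$ fixed to the value $\frac{3\log(2n)}{\epsilon^2}$ from the outset rather than derived along the way.

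\textbf{Step 1 (expectation and reformulation).} Since $\E[Y_e]=\alpha w_e$, linearity gives $\E\!\big[\frac{1}{\alpha{n \choose 2}}\sum_{e}Y_e\big]=\frac{1}{{n \choose 2}}\sum_e w_e=\overline{w}$. Writing $M:=\sum_{e\in E}\E[Y_e]=\alpha{n \choose 2}\overline{w}$, the event in the theorem statement is exactly $\big|\sum_e Y_e - M\big|\le \epsilon M$. Because $\H$ is built with parameter $\i=\frac{3\log(2n)}{\epsilon^2}$, the construction guarantee (Theorem~\ref{thm:sketch:mainH}, i.e.\ $\i\le\E[\sum_{e\in E_H}w'_e]\le 2\i$, and $\sum_{e\in E}Y_e=\sum_{e\in E_H}w'_e$) yields $M\ge \i=\frac{3\log(2n)}{\epsilon^2}$.

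\textbf{Step 2 (isolate the random part).} For an edge with $\alpha w_e>1$ the value $Y_e=\alpha w_e$ is deterministic and contributes nothing to the fluctuation; for an edge with $\alpha w_e\le 1$ we have $Y_e=X_e$, a Bernoulli variable of mean $\alpha w_e$, and these are independent across edges. Hence $\sum_e Y_e - M=\sum_{e:\alpha w_e\le 1}(X_e-\E[X_e])$, a centered sum of independent $\{0,1\}$ variables with total mean $\mu:=\sum_{e:\alpha w_e\le 1}\alpha w_e$, and crucially $\mu\le M$.

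\textbf{Step 3 (Chernoff).} Apply the Chernoff bound to $\sum_{e:\alpha w_e\le1}X_e$ with relative deviation $\delta:=\epsilon M/\mu$, so the deviation is $\delta\mu=\epsilon M$. When $\delta\le 1$ this gives failure probability at most $2\exp(-\delta^2\mu/3)=2\exp(-\epsilon^2M^2/(3\mu))\le 2\exp(-\epsilon^2M/3)\le 2\exp\!\big(-\tfrac{\epsilon^2}{3}\cdot\tfrac{3\log(2n)}{\epsilon^2}\big)=2\exp(-\log(2n))=\tfrac1n$, where the middle inequality uses $\mu\le M$. If instead $\delta>1$, the lower-tail event is vacuous since $\sum X_e\ge 0>\mu-\epsilon M$, while the upper tail is at most $\exp(-\delta\mu/3)=\exp(-\epsilon M/3)\le\exp(-\log(2n)/\epsilon)\le\tfrac{1}{2n}$ (using $\epsilon\le1$); either way the probability of deviating is at most $\tfrac1n$. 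Taking complements gives the claim.

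\textbf{Main obstacle.} There is no substantial one: the theorem is a clean instantiation of the machinery of Section~\ref{sec:sample}. The only points requiring a little care are (i) that the deterministic edges ($\alpha w_e>1$) do not affect concentration, so a Chernoff bound on the Bernoulli edges alone suffices, and (ii) that the Bernoulli mass satisfies $\mu\le M$, which lets the relative deviation be turned into the absolute bound $\epsilon M$ and the factor $\epsilon^{-2}$ be absorbed into $\i$; both are handled exactly as in the proof of Lemma~\ref{lm:sketch:w}, including the mildly loose treatment of the $\delta>1$ regime.
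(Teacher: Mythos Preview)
Your proposal is correct and follows essentially the same route as the paper's proof: show the estimator is unbiased, cancel the deterministic edges so that only the Bernoulli part $\sum_{e:\alpha w_e\le 1}X_e$ fluctuates, and apply the Chernoff bound together with $M=\sum_e\E[Y_e]\ge\i=\tfrac{3\log(2n)}{\epsilon^2}$. The only difference is cosmetic: you explicitly treat the regime $\delta=\epsilon M/\mu>1$, whereas the paper applies its stated Chernoff bound (which formally requires the relative deviation to lie in $[0,1]$) without separating that case.
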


\begin{proof}
	We define $\hat w = \frac {1}{\alpha {n \choose 2}}\sum_{e \in E} Y_e$
	Notice that 
	\begin{align}\label{eq:app:EWhat}
	\ex{\hat{w}} = \ex{\frac 1 {{n \choose 2}} \sum_{e \in E} \frac{Y_e}{\alpha}} = {\frac 1 {{n \choose 2}} \sum_{e\in E} w_e} = \overline{w}.
	\end{align} 
	We have
	\begin{align*}
	\prob{|\hat{w} - \overline{w}| \leq \epsilon \overline{w} } &= \prob{|\hat{w} - \ex{\hat{w}}| \leq \epsilon \overline{w}} &\text{By Equality \ref{eq:app:EWhat}}\\
	&= \prob{\Big|\frac 1 {{n \choose 2}{\alpha}} \Big( \sum_{ e \in E } Y_e - \sum_{e \in E} \ex{Y_e} \Big)\Big| \leq \epsilon \overline{w} }\\
	&= \prob{\Big|\frac 1 {{n \choose 2}{\alpha}} \Big( \sum_{ e \in E } X_e - \sum_{e \in E} \ex{X_e} \Big)\Big| \leq \epsilon \overline{w} }&\text{If $Y_{e}\neq X_e$, $Y_{e}=\E[Y_{e}]$}\\
	&= \prob{\Big| \sum_{ e \in E } X_e - \sum_{e \in E} \ex{X_e} \Big| \leq \epsilon \alpha {n \choose 2} \overline{w} }\\
	&\leq 2\exp{\Big(-\frac 1 3  \big(\frac{\epsilon \alpha {n \choose 2} \overline{w}}{\sum_{e \in E} \ex{X_e}}\big)^2 \sum_{e \in E} \ex{X_e}\Big)} &\text{Chernoff Bound}\\
	&\leq 2\exp{\Big(-\frac 1 3  \frac{\epsilon^2 \alpha^2 {n \choose 2}^2 \overline{w}^2}{\sum_{e \in E} \ex{X_e}} \Big)}\\			
	&= 2\exp{\Big(-\frac 1 3  \frac{\epsilon^2 \big(\sum_{e \in E} \ex{Y_e} \big)^2}{\sum_{e \in E} \ex{X_e}} \Big)}\\			
	&\leq 2\exp{\Big(-\frac 1 3  {\epsilon^2 \sum_{e \in E} \ex{Y_e}} \Big)}\\			
	&\leq 2\exp{\Big(-\frac 1 3  {\epsilon^2 \frac{3\log (2n) }{\epsilon^2}} \Big)} & \i = \frac{ 3\log (2n) }{\epsilon^2} \\		
	&= 2\exp{\big(-  { {\log (2 n) }} \big)}	= \frac 1 n.
	\end{align*}
	This means that with probability $1-\frac 1 n$ we have $(1-\epsilon) \overline{w} \leq \hat{w} \leq (1+\epsilon) \overline{w}$ as desired.
\end{proof}

Next we provide our results for the densest subgraph problem.

\begin{theorem}
	Take $\i = \frac{9\log n}{\eps^2} n$. Let $S$ be a $\phi$-approximation solution to the densest subgraph problem on $\H$. $S$ is a $\phi-2\eps$ approximation solution to the densest subgraph on $G$, with probability at least $1-\frac 1 n$.
\end{theorem}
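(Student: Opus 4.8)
The plan is to show that the density of \emph{every} vertex set $S$ in $\H$ (after rescaling by $1/\alpha$) is close to its density in $G$, so that a $\phi$-approximate densest subgraph of $\H$ pulls back to a $(\phi-2\eps)$-approximate densest subgraph of $G$. Fix a set $S\subseteq V$ with $|S|=s$. Its density in $G$ is $d_G(S) = \frac{1}{s}\sum_{u,v\in S} w_{u,v}$, and the corresponding quantity in $\H$ is $d_H(S) := \frac{1}{\alpha s}\sum_{u,v\in S} Y_{u,v}$. Since $\E[Y_{u,v}]=\alpha w_{u,v}$, we have $\E[d_H(S)] = d_G(S)$. The first step is to prove a concentration bound: with the chosen $\i = \frac{9\log n}{\eps^2}n$ (so $\alpha$ scales accordingly), for a fixed $S$ the value $d_H(S)$ is within an additive $\eps\cdot\optd$ of $d_G(S)$ with probability at least $1 - n^{-2s}$ or so, enough to survive a union bound over all $2^n$ vertex sets.

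The key step is the concentration argument. Following the pattern already used in Lemma~\ref{lm:sketch:w} and the preceding theorem, I would split $\sum_{u,v\in S} Y_{u,v}$ into the ``heavy'' edges with $\alpha w_e > 1$ (which contribute deterministically, $Y_e = \alpha w_e$, with zero variance) and the ``light'' edges with $\alpha w_e \le 1$, where $Y_e = X_e$ is a $0/1$ random variable with mean $\alpha w_e$. Only the light part fluctuates, so $\sum_{u,v\in S} X_{u,v}$ is a sum of independent binary variables with mean $\mu_S := \sum_{u,v\in S,\,\alpha w_e\le 1} \alpha w_e \le \alpha \sum_{u,v\in S} w_{u,v} = \alpha s\, d_G(S) \le \alpha s\,\optd$. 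Apply the Chernoff bound to $\sum X_{u,v}$ with deviation parameter chosen so the additive error in $d_H(S)$ is $\eps\,\optd$; concretely, deviation $\eps \alpha s\,\optd$ around a mean that is at most $\alpha s\,\optd$, which gives a failure probability roughly $2\exp(-\tfrac{1}{3}\eps^2 \alpha s\,\optd)$. Here I would use the crucial lower bound $\alpha\,\optd \ge \alpha\,\overline{w} \ge \i / {n\choose 2} \cdot \tfrac{1}{2} \gtrsim \frac{\log n}{\eps^2 n}\cdot\frac{1}{n} \cdot n^2 \sim \frac{\log n}{\eps^2}$ — more carefully, $\optd \ge \frac{\sum_e w_e}{n} = \frac{(n-1)\overline w}{2}$ so $\alpha\,\optd \ge \frac{\i}{n}\cdot\frac{1}{2}\cdot\frac{n-1}{{n\choose 2}}\cdot\frac{\overline w}{\hat w}$, which is $\Omega(\i/n) = \Omega(\frac{\log n}{\eps^2})$ — making the exponent $\ge 3 s \log n$, so the failure probability for a fixed $S$ of size $s$ is at most $2 n^{-3s} \le n^{-2s}$ for $n$ large. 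Union-bounding over the at most $n^s$ sets of size $s$ and then over $s=1,\dots,n$ gives overall failure probability $O(1/n)$.

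Once we have, with probability $1-1/n$, that $|d_H(S) - d_G(S)| \le \eps\,\optd$ for \emph{all} $S$ simultaneously, the conclusion is routine. Let $S^*$ be the densest subgraph of $G$, so $d_G(S^*)=\optd$; then $d_H(S^*) \ge \optd - \eps\,\optd = (1-\eps)\optd$, hence the optimum of the densest-subgraph problem on $\H$ is at least $(1-\eps)\alpha\,\optd$ (in the unscaled $\H$-objective). If $S$ is a $\phi$-approximate solution on $\H$, then $d_H(S) \ge \phi(1-\eps)\optd$, and applying the two-sided bound again, $d_G(S) \ge d_H(S) - \eps\,\optd \ge \phi(1-\eps)\optd - \eps\,\optd \ge (\phi - 2\eps)\optd$, using $\phi\le 1$. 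So $S$ is a $(\phi-2\eps)$-approximate densest subgraph of $G$.

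The main obstacle I expect is making the union bound over \emph{all} $2^n$ vertex subsets go through: this forces the per-set failure probability to be exponentially small in $|S|$, which is exactly why the exponent in the Chernoff bound must carry a factor of $s=|S|$. This works only because $\mu_S$, the expected number of sampled light edges inside $S$, scales like $\alpha s\,\optd \gtrsim s\log n/\eps^2$ — i.e., linearly in $s$ — so the exponent $\eps^2\mu_S$ grows like $s\log n$. Getting this scaling right (and handling the heavy edges separately so they don't contribute noise) is the technical heart; everything else is bookkeeping analogous to the average-weight estimator already proved.
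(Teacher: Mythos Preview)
Your proposal is correct and follows essentially the same approach as the paper: lower-bound $\alpha\,\optd \ge \beta/n = 9\log n/\eps^2$ via $\optd \ge (\sum_e w_e)/n$, apply Chernoff to the light-edge indicators $X_{u,v}$ inside each $S$ to get per-set failure probability $2\exp(-3|S|\log n)$, union-bound over all subsets grouped by size, and finish with the two-sided density comparison. The paper's route to $\alpha\,\optd \ge \beta/n$ is slightly cleaner---it uses the defining property $\sum_e \E[Y_e] \ge \beta$ of $\H$ directly rather than unpacking $\alpha = \beta/(\binom{n}{2}\hat w)$---but otherwise the arguments coincide.
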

\begin{proof}
	 We start by lower bounding $\optd$.
	 \begin{align}\label{eq:dens:optmin}
		 	\optd \geq \frac{\sum_{u,v\in V} w_{u,v}}{|V|} = \frac{ \frac{1}{\alpha}\sum_{u,v\in V} \E[Y_{u,v}]}{n} \geq \frac{1}{\alpha} \frac{\beta}{n } = \frac 1 {\alpha} \frac{9\log n}{\eps^2}.
	 \end{align}
	 Let $S'$ be a subset of $V$. We define $X_{S'}= \sum_{u,v\in S'} X_{u,v} $, and $Y_{S'}= \sum_{u,v\in S'} Y_{u,v} $. Note that we have $X_{S'}\leq Y_{S'}$. We have 
	 	$\E[Y_{S'}]= \sum_{u,v\in S'} \E[Y_{u,v}] = \alpha \sum_{u,v\in S'} w_{u,v}$.
	 Hence, we have
	 \begin{align}\label{eq:dens:optE}
	 	\optd \geq \frac{\sum_{u,v\in S'} w_{u,v}}{|S'|} = \frac{\E[Y_{S'}]}{\alpha|S'|} \geq \frac{\E[X_{S'}]}{\alpha |S'|}
	 \end{align}
	  Note that $X_{u,v}$'s are chosen independently, and hence by applying the Chernoff bound to $X_{S'}$ for $\epsilon = \eps \frac{\alpha\optd|S'|}{\E[X_{S'}]}$ we have 
	 \begin{align*}
	 		\prob{|Y_{S'}-\E[Y_{S'}]| \geq \eps \alpha \optd |S'| } &= 
	 		\prob{|X_{S'}-\E[X_{S'}]| \geq \eps \alpha \optd |S'| } &\text{If $Y_{e}\neq X_e$, $Y_{e}=\E[Y_{e}]$}\\	 		
	 		&\leq 2 \exp\Big(-\frac 1 3\big(\eps \frac{\alpha\optd|S'|}{\E[X_{S'}]}\big)^2\E[X_{S'}]\Big) &\text{Chernoff bound}\\
	 		& = 2 \exp\Big(-\frac 1 3\eps^2 \frac{\alpha^2{\optd}^2|S'|^2}{ \E[X_{S'}]}\Big)\\
	 		& \leq 2 \exp\Big(-\frac 1 3 \eps^2 \alpha \optd |S'| \Big) &\text{By Inequality \ref{eq:dens:optE}}\\
	 		& \leq 2 \exp\Big(-\frac 1 3 \eps^2 \frac{9\log n}{\eps^2} |S'| \Big) &\text{By Inequality \ref{eq:dens:optmin}}\\
	 		& = 2 \exp\big(- 3 |S'| \log n \big). \\
	 \end{align*}
	 Next we union bound over all choices of $S'$.
	 \begin{align*}
		\prob{\exists_{S'} \big|Y_{S'}-\E[Y_{S'}]\big| \geq \eps \alpha \optd |S'| } & =  \prob{\exists_ k \exists_{|S'|=k} \big|Y_{S'}-\E[Y_{S'}]\big| \geq \eps \alpha \optd k }\\
		& \leq \sum_{k=2}^{n} \prob{\exists_{|S'|=k} \big|Y_{S'}-\E[Y_{S'}]\big| \geq \eps \alpha \optd k } &\text{Union bound}\\
	 	& \leq \sum_{k=2}^{n} \sum_{|S'|=k} \prob{ \big|Y_{S'}-\E[Y_{S'}]\big| \geq \eps \alpha \optd k } &\text{Union bound}\\
	 	&\leq \sum_{k=2}^{n} \sum_{|S'|=k} 2 \exp\big(-3 k \log n \big) \\
	 	& = \sum_{k=2}^{n} 2 {n \choose k}  \exp\big(-3 k \log n \big) \\
	 	& \leq \sum_{k=2}^{n} 2 \exp\big(-3 k \log n  + k \log n \big) & {n \choose k} \leq n^k\\
	 	& \leq \sum_{k=2}^{n} 2 \exp\big(- 4 \log n  \big) & k \geq 2 \\
	 	& \leq 2 \exp\big( - 3 \log n \big) \\
	 	& = \frac 2 {n^3} < \frac 1 n. & n \geq 2
	 \end{align*}
	 Therefore, with probability at least $1-\frac 1 n$ simultaneously for all $S' \subseteq V$ we have 
	 \begin{align}\label{eq:dens:main}
	 	\big|Y_{S'}-\E[Y_{S'}]\big| \leq \eps \alpha \optd |S'|.
	 \end{align}
	Next we prove the statement of the theorem in the cases where Inequality \ref{eq:dens:main} holds. Let $S^*$ be a densest subgraph of $G$. We have 
	\begin{align*}
		 \frac {\sum_{u,v\in S} w_{u,v}}{|S|} &= \frac{\frac 1 {\alpha} \E[\sum_{u,v\in S} Y_{u,v}]}{|S|} & \text{$\E[Y_{u,v}]= \alpha w_{u,v}$}\\
		 &= \frac 1 {\alpha} \frac{\E[Y_{S}]}{|S|} & \text{Definition of $Y_{S}$}\\
		 &\geq \frac 1 {\alpha}\frac{ Y_{S}}{|S|} -  \eps \optd & \text{By Inequality \ref{eq:dens:main}}\\
		 & \geq \frac 1 {\alpha}\phi \max_{S''}\frac{ Y_{S''}}{|S''|} -  \eps \optd & \text{$S$ is a $\phi$ approximation on $\H$}\\
		 & \geq \frac 1 {\alpha}\phi \frac{ Y_{S^*}}{|S^*|} -  \eps \optd \\
		  & \geq \frac 1 {\alpha}\phi \frac{ \E[ Y_{S^*} ]}{|S^*|} -  2 \eps \optd &\text{By Inequality \ref{eq:dens:main}} \\
		  & \geq \phi \frac{ \sum_{u,v\in S^*} w_{u,v} }{|S^*|} -  2 \eps \optd & \text{$\E[Y_{u,v}]= \alpha w_{u,v}$}\\
		  & = (\phi - 2\eps)\optd. & \text{Definition of $S^*$}
	\end{align*}
	  
\end{proof}

Recall that, as stated in the introduction, this result implies a $(1/2-\epsilon)$-approximation algorithm for densest subgraph in $\lambda$-metric spaces requiring $\tilde{O}(\frac{n}{\lambda \epsilon^2})$ time.

The following theorem shows the efficiency of our technique for $k$-hypermatching.

\begin{theorem}
	Choose $\i = \frac{6\log n}{\eps^2} \frac{n^2}{k-1} \in \tilde{O}\big(\frac {n^2} {\epsilon^2k}\big)$. Let $\S= \{S_1,S_2,\dots,S_{n/k}\}$ be a $\phi$-approximation solution to the $k$-hypermatching on unweighted graph $\H$. $\S$ is a $\phi-2\eps$ approximation solution to the $k$-hypermatching on $G$, with probability at least $1-\frac 1 n$.
\end{theorem}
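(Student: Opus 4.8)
The plan is to mirror the densest-subgraph theorem above, with one structural simplification: every $k$-hypermatching partitions $V$ into $n/k$ blocks of size $k$ and hence covers exactly $\frac{n}{k}\binom{k}{2}=\frac{n(k-1)}{2}$ pairs, the same count for all hypermatchings, so the deviation I track will not need to scale with a ``size'' parameter the way it did for $S'$ in the densest-subgraph proof.

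First I would lower bound $\optm$. A uniformly random partition of $V$ into blocks of size $k$ puts any fixed pair $(u,v)$ into a common block with probability $\frac{k-1}{n-1}$, so $\optm\ge\frac{k-1}{n-1}\sum_{e\in E}w_e$. Since $\E[\sum_{e\in E}Y_e]=\alpha\sum_{e\in E}w_e$ and the construction of $\H$ guarantees $\E[\sum_{e\in E_H}w'_e]\ge\i$, this gives $\alpha\,\optm\ge\frac{k-1}{n-1}\i$; plugging in $\i=\frac{6\log n}{\eps^2}\cdot\frac{n^2}{k-1}$ yields $\alpha\,\optm\ge\frac{6n\log n}{\eps^2}$. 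This is the quantity that must overpower the number of hypermatchings in the union bound.

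Next, for a fixed $k$-hypermatching $\mathcal S'=\{S'_1,\dots,S'_{n/k}\}$ set $Y_{\mathcal S'}=\sum_i\sum_{u,v\in S'_i}Y_{u,v}$ and $X_{\mathcal S'}=\sum_i\sum_{u,v\in S'_i}X_{u,v}$, so that $\E[Y_{\mathcal S'}]=\alpha\sum_i\sum_{u,v\in S'_i}w_{u,v}\le\alpha\,\optm$, $X_{\mathcal S'}\le Y_{\mathcal S'}$, and $Y_{\mathcal S'}-\E[Y_{\mathcal S'}]=X_{\mathcal S'}-\E[X_{\mathcal S'}]$ because edges with $\alpha w_e>1$ contribute a deterministic amount. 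Applying the Chernoff bound to $X_{\mathcal S'}$ with relative deviation $\eps\,\alpha\optm/\E[X_{\mathcal S'}]$ and using $\E[X_{\mathcal S'}]\le\alpha\optm$ yields $\Pr[\,|Y_{\mathcal S'}-\E[Y_{\mathcal S'}]|\ge\eps\,\alpha\optm\,]\le2\exp(-\tfrac13\eps^2\alpha\optm)\le2\exp(-2n\log n)=2n^{-2n}$. Since the number of $k$-hypermatchings is at most $\binom{n}{k}^{n/k}\le n^{n}$, a union bound shows that, except with probability at most $2n^{-n}\le\tfrac1n$, every $k$-hypermatching $\mathcal S'$ satisfies $|Y_{\mathcal S'}-\E[Y_{\mathcal S'}]|\le\eps\,\alpha\optm$. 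Conditioning on this event and letting $\mathcal S^*$ be a maximum $k$-hypermatching of $G$, the same chain as in the densest-subgraph proof gives $\frac{1}{\alpha}\E[Y_{\mathcal S}]\ge\frac1\alpha Y_{\mathcal S}-\eps\optm\ge\frac{\phi}{\alpha}Y_{\mathcal S^*}-\eps\optm\ge\frac{\phi}{\alpha}\E[Y_{\mathcal S^*}]-2\eps\optm=(\phi-2\eps)\optm$, where the middle inequality uses that $\mathcal S$ is a $\phi$-approximation on $\H$ (and $\phi\le1$), and the left-hand side is exactly the value of $\mathcal S$ on $G$; hence $\mathcal S$ is a $(\phi-2\eps)$-approximation on $G$.

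I expect the crux to be the union bound in the third step. Unlike the densest-subgraph case, where the deviation scaled with $|S'|$ so that one effectively unioned only over subsets of each fixed size, here one must control all $\binom{n}{k}^{n/k}$ hypermatchings simultaneously, and this is exactly why $\i$ has to be taken as large as $\Theta\!\big(\tfrac{n^2\log n}{\eps^2 k}\big)$: it is the smallest choice making $\tfrac13\eps^2\alpha\optm\ge 2n\log n$, which dominates $\log\binom{n}{k}^{n/k}\le n\log n$. A secondary point needing care is the split of edges into ``heavy'' ones ($\alpha w_e>1$), which are deterministic and must be excluded from the Chernoff sum, and ``light'' ones, which carry all the randomness; keeping this split straight is what justifies the $X$--$Y$ comparison, and it is also what lets a reference algorithm be run either on $\H$ with its weights or, via $X_{\mathcal S}\le Y_{\mathcal S}$ together with the deterministic nature of the heavy-edge contributions, on the unweighted version of $\H$.
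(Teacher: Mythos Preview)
Your proposal is correct and follows essentially the same route as the paper: a random-partition lower bound on $\optm$ to get $\alpha\,\optm\ge \tfrac{6n\log n}{\eps^2}$, a Chernoff bound on $X_{\mathcal S'}$ combined with $Y_{\mathcal S'}-\E[Y_{\mathcal S'}]=X_{\mathcal S'}-\E[X_{\mathcal S'}]$, a union bound over at most $n^n$ hypermatchings, and the standard two-sided sandwich to conclude. The only cosmetic differences are that you bound the number of hypermatchings by $\binom{n}{k}^{n/k}\le n^n$ rather than citing $n^n$ directly, and you make the heavy/light edge split more explicit.
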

\begin{proof}
	Let $\S''= \{S''_1,S''_2,\dots,S''_{n/k}\}$ be a $k$-hypermatching chosen uniformly at random among all $k$-hypermatchings. Note that the number of edges that fall in $\S''$ is $\frac n k {k \choose 2} = \frac{n(k-1)}{2}$, while there are ${n \choose 2}= \frac {n(n-1)}{2}$ edges in $G$ in total. Hence, due to symmetry each edge falls in $\S''$ with probability $\frac {k-1}{n-1} \leq \frac{k-1}{n}$. Now, we give a lower bound on $\optm$. We later use this bound in our concentration bound.
	\begin{align}\label{eq:match:optmin}
	\optm \geq \E [\sum_{i=1}^{n/k} \sum_{u,v \in S''_i} w_{u,v}] = \frac{k-1}{n} \sum_{u,v \in V} w_{u,v}  =  \frac{k-1}{n} \frac 1 {\alpha} \sum_{u,v\in V} \E[Y_{u,v}] \geq \frac{k-1}{n} \frac 1 {\alpha} \beta = \frac 1 {\alpha} \frac{6n\log n}{\eps^2}.
	\end{align}
	Let $\S'= \{S'_1,S'_2,\dots,S'_{n/k}\}$ be a $k$-hypermatching of $G$ (i.e., a decomposition of $V$ into $n/k$ distinct subsets of size $k$). We define $X_{\S'}= \sum_{i=1}^{n/k} \sum_{u,v \in S'_i} X_{u,v} $ and $Y_{\S'}= \sum_{i=1}^{n/k} \sum_{u,v \in S'_i} Y_{u,v} $. We have 
	$\E[X_{\S'}]= \sum_{i=1}^{n/k} \sum_{u,v \in S'_i} \E[X_{u,v}] = \alpha \sum_{i=1}^{n/k} \sum_{u,v \in S'_i} w_{u,v}$.
	Hence we have
	\begin{align}\label{eq:match:optE}
	\optm \geq \sum_{i=1}^{n/k} \sum_{u,v \in S'_i} w_{u,v} = \sum_{i=1}^{n/k} \sum_{u,v \in S'_i} \frac 1 {\alpha}\E[Y_{\S'}] \geq \sum_{i=1}^{n/k} \sum_{u,v \in S'_i} \frac 1 {\alpha}\E[X_{\S'}].
	\end{align}
	Note that $X_{u,v}$'s are chosen independently, and hence by applying the Chernoff bound to $X_{\S'}$ for $\epsilon = \eps \frac{\alpha\optm}{\E[X_{\S'}]}$ we have 
	\begin{align*}
	\prob{|Y_{\S'}-\E[Y_{\S'}]| \geq \eps \alpha \optm } &= 
	\prob{|X_{\S'}-\E[X_{\S'}]| \geq \eps \alpha \optm }  &\text{If $Y_e \neq X_e$, $Y_e=\E[Y_e]$}\\
	&\leq 2 \exp\Big(-\frac 1 3\big(\eps \frac{\alpha\optm}{\E[X_{\S'}]}\big)^2\E[X_{\S'}]\Big) &\text{Chernoff bound}\\
	& = 2 \exp\Big(-\frac 1 3\eps^2 \frac{\alpha^2{\optm}^2}{ \E[X_{\S'}]}\Big)\\
	& \leq 2 \exp\Big(-\frac 1 3 \eps^2 \alpha \optm  \Big) &\text{By Inequality \ref{eq:match:optE}}\\
	& \leq 2 \exp\Big(-\frac 1 3 \eps^2 \frac{6n\log n}{\eps^2}  \Big) &\text{By Inequality \ref{eq:match:optmin}}\\
	& = 2 \exp\big(- 2 n \log n \big) \\
	\end{align*}
	
	Next we union bound over all choices of $\S'$.
	\begin{align*}
	\prob{\exists_{\S'} \big|Y_{\S'}-\E[Y_{\S'}]\big| \geq \eps \alpha \optm } & \leq \sum_{\S'} \prob{ \big|Y_{\S'}-\E[Y_{\S'}]\big| \geq \eps \alpha \optm } &\text{Union bound}\\
	&\leq \sum_{\S'} 2  \exp\big(-2 n \log n \big) \\
	& \leq 2 n^n \exp\big(-2 n \log n \big) \\
	& \leq 2 \exp\big( - n\log n \big) \leq \frac 1 n. & n \geq 2
	\end{align*}
	Therefore, with probability $1-\frac 1 n$ simultaneously for all $\S' \subseteq V$ we have 
	\begin{align}\label{eq:match:main}
	\big|Y_{\S'}-\E[Y_{\S'}]\big| \leq \eps \alpha \optm.
	\end{align}
	Next we prove the statement of the theorem in the cases where Inequality \ref{eq:match:main} holds. Let $\S^*= \{S_1^*,S_2^*,\dots,S_{n/k}^*\}$ be a maximum $k$-hypermatching of $G$. We have 
	\begin{align*}
	 {\sum_{i=1}^{n/k}\sum_{u,v\in S_i} w_{u,v}} &= {\frac 1 {\alpha} \E\Big[\sum_{i=1}^{n/k}\sum_{u,v\in S_i} Y_{u,v}\Big]} & \text{$\E[Y_{u,v}]= \alpha w_{u,v}$}\\
	&= \frac {\E[Y_{\S}]} {\alpha}  & \text{Definition of $Y_{\S}$}\\
	&\geq \frac { Y_{\S}} {\alpha} -  \eps \optm & \text{By Inequality \ref{eq:match:main}}\\
	& \geq \phi\frac{ \max_{\S''} Y_{\S''}} {\alpha}  -  \eps \optm & \text{$\S$ is a $\phi$ approximation on $\H$}\\
	& \geq \phi\frac{ Y_{\S^*}} {\alpha}  -  \eps \optm & \\
	& \geq \phi\frac{ \E[ Y_{\S^*} ]} {\alpha}  -  2 \eps \optm &\text{By Inequality \ref{eq:match:main}} \\
	& \geq \phi {\sum_{i=1}^{n/k} \sum_{u,v\in \S^*_i} w_{u,v} } -  2 \eps \optm & \text{$\E[Y_{u,v}]= \alpha w_{u,v}$}\\
	& = (\phi - 2\eps)\optm. & \text{Definition of $S^*$}
	\end{align*}
	
\end{proof}

Finally we show the efficiency of our sketch for finding the maximum cut, again following the same basic proof outline. Here, we indicate a cut by the set of vertices of its smaller side, breaking ties arbitrarily.

\begin{theorem}
	Choose $\i = \frac{18\log n}{\eps^2} n$. Let $S$ be a $\phi$-approximation solution to the maximum cut on $\H$. $S$ is a $\phi-2\eps$ approximation solution to the maximum cut on $G$, with probability at least $1-\frac 1 n$.
\end{theorem}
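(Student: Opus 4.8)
The plan is to reuse, essentially verbatim, the three-step template from the densest-subgraph and $k$-hypermatching theorems. \emph{Step 1: a lower bound on $\optc$.} I would consider the random cut that places each vertex on either side independently and uniformly, which cuts every edge with probability exactly $\tfrac12$; hence $\optc \ge \tfrac12\sum_{u,v\in V} w_{u,v} = \tfrac{1}{2\alpha}\sum_{u,v\in V}\E[Y_{u,v}] \ge \tfrac{\beta}{2\alpha} = \tfrac1\alpha\cdot\tfrac{9n\log n}{\eps^2}$, using the defining property $\E[\sum_e Y_e]\ge\beta$ of $\H$ and the choice $\beta = \tfrac{18\log n}{\eps^2}n$ (the doubling of $\beta$ relative to the densest-subgraph case is exactly to absorb the factor $\tfrac12$ coming from the random cut).

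\emph{Step 2: uniform concentration over all cuts, then chaining.} For a fixed cut $S'$, set $Y_{S'}=\sum_{u\in S',\,v\notin S'}Y_{u,v}$ and $X_{S'}=\sum_{u\in S',\,v\notin S'}X_{u,v}$; then $\E[Y_{S'}]=\alpha\sum_{u\in S',\,v\notin S'}w_{u,v}\le\alpha\optc$, and since $Y_e$ and $X_e$ differ only on edges where $Y_e$ is deterministic, $Y_{S'}-\E[Y_{S'}]=X_{S'}-\E[X_{S'}]$. Applying the Chernoff bound to the sum of independent Bernoulli variables $X_{S'}$ with relative deviation $\eps\alpha\optc/\E[X_{S'}]$, then using $\E[X_{S'}]\le\alpha\optc$ and Step 1, yields $\prob{|Y_{S'}-\E[Y_{S'}]|\ge\eps\alpha\optc}\le 2\exp(-3n\log n)$. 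There are at most $n^n$ cuts, so a union bound gives a total failure probability at most $2n^n\exp(-3n\log n)\le 2\exp(-2n\log n)\le\tfrac1n$. Conditioning on this good event and letting $S^*$ be a maximum cut of $G$, I would finish by chaining $\sum_{u\in S,v\notin S}w_{u,v}=\tfrac1\alpha\E[Y_S]\ge\tfrac1\alpha Y_S-\eps\optc\ge\tfrac\phi\alpha\max_{S''}Y_{S''}-\eps\optc\ge\tfrac\phi\alpha Y_{S^*}-\eps\optc\ge\tfrac\phi\alpha\E[Y_{S^*}]-2\eps\optc=\phi\sum_{u\in S^*,v\notin S^*}w_{u,v}-2\eps\optc=(\phi-2\eps)\optc$, where the key inequality uses that $S$ is a $\phi$-approximate cut on $\H$.

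There is essentially no genuinely new obstacle: the proof is structurally identical to the two preceding theorems, so the only real issue is quantitative. The cut family has size $2^n$, exponentially larger than the $\sum_k\binom nk$ vertex subsets that appeared for densest subgraph, so the per-cut tail bound must be of order $\exp(-\Theta(n\log n))$ to survive the union bound, which is precisely why $\beta$ is taken to be $\Theta(n\log n/\eps^2)$. A secondary point of care, as in the earlier proofs, is that the relative deviation $\eps\alpha\optc/\E[X_{S'}]$ passed to the Chernoff bound can exceed $1$; in that regime the lower tail is vacuous (since $X_{S'}\ge0$) and the upper-tail estimate still yields the claimed bound because $\eps\le1$, so one retains $\prob{|Y_{S'}-\E[Y_{S'}]|\ge\eps\alpha\optc}\le 2\exp(-\tfrac13\eps^2\alpha\optc)$ throughout.
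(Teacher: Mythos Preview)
Your proof is correct and follows essentially the same route as the paper: lower-bound $\optc$ by $\tfrac12\sum_e w_e$, apply Chernoff to $X_{S'}$, union bound over all cuts, then chain. The only cosmetic differences are that the paper obtains $\optc\ge\tfrac12\sum_e w_e$ via local optimality (moving a vertex across the optimal cut cannot increase its value) rather than the random-cut expectation, and it union-bounds over $2^n$ subsets rather than your looser $n^n$; neither change matters. One small inaccuracy in your commentary: the densest-subgraph proof also union-bounds over all $\sum_k\binom{n}{k}=2^n$ subsets, not fewer---the difference there is that the per-set tail bound scales as $\exp(-3k\log n)$ with $k=|S'|$, whereas for max cut the tail must be uniform in $|S'|$, which is what forces the $\Theta(n\log n/\eps^2)$ choice of $\beta$.
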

\begin{proof}
	First we lower bound $\optc$. Note that in the optimum solution moving a vertex from one side to the other does not increase the value of the cut. Thus, for each vertex $v\in V$ the total weight of the edges neighboring $v$ in the cut is at least half of the total weight of all edges neighboring $v$. Hence we have 
	\begin{align}\label{eq:cut:optmin}
	\optc \geq \frac 1 2 \sum_{v\in V} \sum_{u \in V}  w_{u,v} =  \frac{1}{2 \alpha}\sum_{v\in V} \sum_{u \in V} \E[Y_{u,v}] \geq \frac{1}{2\alpha} \beta = \frac 1 {\alpha} \frac{9\log n}{\eps^2}n.
	\end{align}
	Let $S'$ be a subset of $V$. We define $X_{S'}= \sum_{v\in S'}\sum_{u\in V\setminus S'} X_{u,v} $, and $Y_{S'}= \sum_{v\in S'}\sum_{u\in V\setminus S'} Y_{u,v} $. Note that we have $X_{S'}\leq Y_{S'}$. We have 
	$\E[Y_{S'}]= \sum_{v\in S'}\sum_{u\in V\setminus S'} \E[Y_{u,v}] = \alpha \sum_{v\in S'}\sum_{u\in V\setminus S'} w_{u,v}$.
	Hence, we have
	\begin{align}\label{eq:cut:optE}
	\optc \geq \sum_{v\in S'}\sum_{u\in V\setminus S'} w_{u,v} = \frac{\E[Y_{S'}]}{\alpha} \geq \frac{\E[X_{S'}]}{\alpha }. 
	\end{align}
	Note that the $X_{u,v}$'s are independent, and hence by applying the Chernoff bound to $X_{S'}$ for $\epsilon = \eps \frac{\alpha\optc}{\E[X_{S'}]}$ we have 
	\begin{align*}
	\prob{|Y_{S'}-\E[Y_{S'}]| \geq \eps \alpha \optc } &= 
	\prob{|X_{S'}-\E[X_{S'}]| \geq \eps \alpha \optc } &\text{If $Y_{e}\neq X_e$, $Y_{e}=\E[Y_{e}]$}\\	 		
	&\leq 2 \exp\Big(-\frac 1 3\big(\eps \frac{\alpha\optc}{\E[X_{S'}]}\big)^2\E[X_{S'}]\Big) &\text{Chernoff bound}\\
	& = 2 \exp\Big(-\frac 1 3\eps^2 \frac{\alpha^2{\optc}^2}{ \E[X_{S'}]}\Big)\\
	& \leq 2 \exp\Big(-\frac 1 3 \eps^2 \alpha \optc  \Big) &\text{By Inequality \ref{eq:cut:optE}}\\
	& \leq 2 \exp\Big(-\frac 1 3 \eps^2 \frac{9\log n}{\eps^2} n \Big) &\text{By Inequality \ref{eq:cut:optmin}}\\
	& = 2 \exp\big(- 3 n \log n \big) \\
	\end{align*}
	Next we union bound over all choices of $S'$.

	\begin{align*}
	\prob{\exists_{S'} \big|Y_{S'}-\E[Y_{S'}]\big| \geq \eps \alpha \optc } &
	\leq \sum_{S'\subseteq V} \prob{ \big|Y_{S'}-\E[Y_{S'}]\big| \geq \eps \alpha \optc k } &\text{Union bound}\\
	&\leq \sum_{S'\subseteq V} 2 \exp\big(-3 n \log n \big) \\
	& = 2^{n+1} \exp\big(-3 n \log n \big) \\
	& \leq \frac 1 {n^n} < \frac 1 n. & n \geq 2
	\end{align*}
	Therefore, with probability at least $1-\frac 1 n$ simultaneously for all $S' \subseteq V$ we have 
	\begin{align}\label{eq:cut:main}
	\big|Y_{S'}-\E[Y_{S'}]\big| \leq \eps \alpha \optc.
	\end{align}
	Next we prove the statement of the theorem in the cases where Inequality \ref{eq:cut:main} holds. Let $S^*$ be a maximum cut of $G$. We have 
	\begin{align*}
	\sum_{v\in S^*}\sum_{u\in V\setminus S^*} w_{u,v} &= \sum_{v\in S^*}\sum_{u\in V\setminus S^*}\frac 1 {\alpha} \E[Y_{u,v}] & \text{ $\E[Y_{u,v}]= \alpha w_{u,v}$}\\
	&= \frac 1 {\alpha} \E[Y_{S}] & \text{Definition of $Y_{S}$}\\
	&\geq \frac 1 {\alpha} Y_{S} -  \eps \optc & \text{By Inequality \ref{eq:cut:main}}\\
	& \geq \frac 1 {\alpha}\phi \max_{S''} Y_{S''} -  \eps \optc & \text{$S$ is a $\phi$ approximation on $\H$}\\
	& \geq \frac 1 {\alpha}\phi  Y_{S^*} -  \eps \optc\\
	& \geq \frac 1 {\alpha}\phi  \E[ Y_{S^*} ] -  2 \eps \optc &\text{By Inequality \ref{eq:cut:main}} \\
	& \geq \phi { \sum_{v\in S^*}\sum_{u\in V\setminus S^*} w_{u,v} } -  2 \eps \optc & \text{$\E[Y_{u,v}]= \alpha w_{u,v}$}\\
	& = (\phi - 2\eps)\optc. & \text{Definition of $S^*$}
	\end{align*}
	
\end{proof}

\section{Impossibility Results}\label{sec:hard}
In this section we consider all of the problems of the previous section and show that it is necessary to use $\Omega(n)$ queries even if we just want to estimate the value of the solutions. In particular, we show that $\Omega(n)$ queries are required to distinguish the following two graphs.
\begin{itemize}
	\item In $G_1$ we have $n$ vertices $\{v_1,\dots,v_n\}$ and the weight of all edges are $0$.
	\item In $G_2$ again we have $n$ vertices. Pick an index $r \in \{1,\dots,n\}$ uniformly at random. The weight of each edge neighboring $v_r$ is  $1$. The weight of all other edges is $0$.
\end{itemize}

The following lemma shows the hardness of distinguishing $G_1$ and $G_2$.

\begin{lemma} \label{lm:hard:G1G2}
	For any $\delta \in (0,0.5]$, it is impossible to distinguish $G_1$ and $G_2$ using $\delta n - 1$ queries with probability $0.5+\delta$.
\end{lemma}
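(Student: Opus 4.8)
The plan is to run a standard indistinguishability (Yao-style) argument: I will exhibit a single input distribution on which no algorithm making at most $q \le \delta n - 1$ queries can succeed with probability $\tfrac12+\delta$. Take the input to be $G_1$ with probability $\tfrac12$ and $G_2$ (with $r$ drawn uniformly from $\{1,\dots,n\}$) with probability $\tfrac12$. Since a randomized algorithm is a distribution over deterministic ones, and since an algorithm that succeeds with probability $\ge \tfrac12+\delta$ on \emph{every} input necessarily has success probability $\ge\tfrac12+\delta$ under \emph{any} input distribution, it suffices to bound the success probability of an arbitrary \emph{deterministic} query algorithm $A$ under this mixture.

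The key observation is that every query in $G_1$ returns $0$, so the execution of $A$ on $G_1$ is a fixed transcript: $A$ issues a fixed sequence of edges $e_1,\dots,e_q$ (adaptivity buys nothing, since all answers are predetermined) and outputs a fixed guess $\hat a\in\{G_1,G_2\}$. Let $T\subseteq V$ be the set of the (at most $2q$) endpoints of $e_1,\dots,e_q$. Now run $A$ on $G_2$ with special vertex $v_r$, and compare the two executions query by query: by induction on the query index, as long as $r\notin T$ every answer $A$ receives is $0$, so $A$ follows exactly the same transcript as on $G_1$ and in particular outputs the same guess $\hat a$. Hence, conditioned on $r\notin T$, $A$ is correct on $G_2$ iff $\hat a=G_2$, which is \emph{false} whenever $\hat a=G_1$.

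It remains to split on $\hat a$. If $\hat a=G_1$, then $A$ is correct on $G_1$ but is correct on $G_2$ only when $r\in T$, an event of probability $|T|/n\le 2q/n$; so the success probability under the mixture is at most
$$\tfrac12\cdot 1+\tfrac12\cdot\frac{|T|}{n}\le \tfrac12+\frac{q}{n}\le \tfrac12+\delta-\tfrac1n<\tfrac12+\delta.$$
If instead $\hat a=G_2$, then $A$ is wrong on $G_1$, so the success probability is at most $\tfrac12\cdot 0+\tfrac12\cdot 1=\tfrac12<\tfrac12+\delta$. In both cases it is strictly below $\tfrac12+\delta$; averaging over $A$'s coins extends this to randomized algorithms, proving the lemma. (The ``$-1$'' in $\delta n-1$ is exactly the slack that makes the inequality strict.)

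The only genuine subtlety — the part to write out carefully — is making the coupling between the $G_1$-run and the $G_2$-run fully rigorous in the presence of adaptivity and internal randomness: one must argue that, conditioned on the coins and on $r\notin T$, the two executions are \emph{literally identical} query-by-query, which is what pins down the output and forces an error. The rest is bookkeeping, and the lemma then immediately feeds into the $\Omega(n)$-query lower bounds for densest subgraph, $k$-hypermatching, max cut, and average distance, since the planted star in $G_2$ moves each of those objectives from $0$ to a value bounded away from $0$.
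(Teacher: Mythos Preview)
Your proof is correct and follows essentially the same approach as the paper: both argue that with probability at least $1-2q/n>1-2\delta$ the queries never touch $v_r$, so the transcript on $G_2$ is identical to the all-zero transcript on $G_1$, forcing the success probability strictly below $\tfrac12+\delta$. The only difference is presentational---you make the Yao reduction to deterministic algorithms and the case split on the default output $\hat a$ explicit, whereas the paper union-bounds directly over the (possibly random) endpoints $u_1,\dots,u_k$ and collapses the two cases into the single line $2\delta+(1-2\delta)/2=0.5+\delta$; your version is arguably cleaner about adaptivity, but the underlying argument is the same.
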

\begin{proof}
	Let $\alg$ be a (possibly randomized) algorithm that distinguishes $G_1$ and $G_2$ using at most $\delta n - 1$ queries. For simplicity, and without loss of generality, we assume that $\alg$ makes exactly $\delta n -1$ queries. Let $(u_1,u_2),(u_3,u_4), \dots , (u_{k-1},u_{k})$ be the sequence of edges probed by $\alg$, where the $u_i$'s may be random variables and $k=2\delta n-2 $. Notice that $v_r$ is chosen uniformly at random. Hence, in case that the input is $G_2$, for any arbitrary $j\in\{1,\dots,k\}$ we have $\prob{u_j = v_r} = \frac 1 n$. Therefore, we have
	\begin{align*}
		\prob{\exists_{i\in \{1,\dots,\frac k 2\}} w_{u_{2i-1},u_{2i}}\neq 1} &= 
		\prob{\exists_{j \in \{1,\dots,k\} } u_j = v_r }   \\
		&\leq \sum_{j=1}^{k} \prob{u_i = v_r} &\text{By union bound}\\
		&= \frac k n  &\text{Since $\prob{u_j = v_r} = \frac 1 n$}\\
		&< 2\delta.  &\text{Since $k=2\delta n-2 $}
	\end{align*}
	Hence, in the case that the input is $G_2$, with probability at least $1-2\delta$ all the edges that $\alg$ queries have weight $0$.  Trivially, in the case that the input is $G_1$ all the queried edges have weight $0$. Therefore the probability that $\alg$ distinguishes $G_1$ and $G_2$ is less than $2\delta + \frac{1-2\delta}{2} = 0.5+\delta$.
\end{proof}

Note that the weight of the edges of $G_1$ is $0$, while average weight of the edges of $G_2$ is $\frac {n-1}{{n \choose 2}} = \frac{2}{n}$.   Therefore any algorithm that estimates the average weight of the edges within any multiplicative factor distinguishes $G_1$ and $G_2$. This together with Lemma \ref{lm:hard:G1G2} proves the following corollary.

\begin{corollary}
	Any approximation algorithm that estimates the average distance a in metric graphs within any multiplicative factor with probability $0.51$ requires $\Omega(n)$ queries.
\end{corollary}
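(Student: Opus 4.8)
\medskip

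The plan is to reduce the indistinguishability statement of Lemma~\ref{lm:hard:G1G2} to average-distance estimation. First I would record the average edge weights of the two instances: in $G_1$ every edge has weight $0$, so its average edge weight is exactly $0$; in $G_2$ there are $n-1$ weight-$1$ edges (those incident to $v_r$) and all remaining edges have weight $0$, so its average edge weight is $\frac{n-1}{\binom{n}{2}} = \frac{2}{n} > 0$. I would also note that both instances really are metric graphs, which is immediate: the all-zero graph trivially satisfies the triangle inequality, and in $G_2$ any three vertices contain at most one copy of $v_r$, so every triangle has weights $(0,0,0)$ or $(1,1,0)$, both of which satisfy $w_{ab}+w_{bc}\ge w_{ca}$ (in fact they satisfy the $\lambda$-triangle inequality for every $\lambda\le 1$, so the corollary also holds in the $\lambda$-metric setting).

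Next, suppose $\alg$ is an algorithm that, with probability at least $0.51$, outputs an estimate $\hat{a}$ of the average distance $a$ that is within some fixed multiplicative factor $c\ge 1$, i.e.\ $\tfrac{1}{c}\,a \le \hat{a} \le c\,a$ whenever it succeeds. I would turn $\alg$ into a tester for ``$G_1$ versus $G_2$'' by running it and declaring $G_1$ if $\hat a = 0$ and $G_2$ if $\hat a > 0$. On input $G_1$ a correct estimate obeys $\hat a \le c\cdot 0 = 0$, hence $\hat a = 0$; on input $G_2$ a correct estimate obeys $\hat a \ge \tfrac{1}{c}\cdot\tfrac{2}{n} > 0$. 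So whenever $\alg$ succeeds (probability $\ge 0.51$) the tester answers correctly, and it makes exactly as many queries as $\alg$ does.

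Finally I would apply Lemma~\ref{lm:hard:G1G2} with $\delta = 0.01$: no algorithm distinguishes $G_1$ from $G_2$ with probability $0.5 + 0.01 = 0.51$ using only $0.01n - 1$ queries. Hence $\alg$ must make at least $0.01n - 1 \in \Omega(n)$ queries, which is the claim. There is no real obstacle here beyond being precise about the meaning of ``within any multiplicative factor''; the whole argument hinges on the qualitative gap that the average distance is $0$ in one instance and strictly positive in the other, a gap that no multiplicative rescaling can erase.
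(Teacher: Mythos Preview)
Your proposal is correct and follows essentially the same approach as the paper: compute that the average distance is $0$ in $G_1$ and $\frac{2}{n}>0$ in $G_2$, observe that any finite multiplicative approximation must therefore distinguish the two, and invoke Lemma~\ref{lm:hard:G1G2}. You add some useful detail the paper omits (verifying the triangle inequality for $G_1$ and $G_2$, and spelling out the tester and the choice $\delta=0.01$), but the underlying argument is the same.
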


Note that the density of the densest subgraph of $G_1$ is $0$, while the density of the densest subgraph of $G_2$ is $\frac {n-1}{n} \geq \frac{1}{2}$. Therefore any algorithm that estimates the density of the densest subgraph within any multiplicative factor distinguishes $G_1$ and $G_2$. This together with Lemma \ref{lm:hard:G1G2} proves the following corollary.

\begin{corollary}
	Any approximation algorithm that estimates the density of the densest subgraph in a metric graphs within any multiplicative factor with probability $0.51$ requires $\Omega(n)$ queries.
\end{corollary}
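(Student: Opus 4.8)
The plan is to derive this directly from Lemma~\ref{lm:hard:G1G2} by exhibiting a multiplicative gap in the densest-subgraph value between the two hard instances $G_1$ and $G_2$. First I would compute the two values. In $G_1$ every edge has weight $0$, so every induced subgraph has density $0$, and hence $\optd = 0$ for $G_1$. In $G_2$, consider the subgraph induced by all $n$ vertices: its total edge weight is exactly the weight of the star centered at $v_r$, namely $n-1$, so its density is $(n-1)/n \geq 1/2$; thus $\optd \geq 1/2$ for $G_2$. (Even just $v_r$ together with one other vertex already certifies a strictly positive density, which is all that is needed.)

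Next I would argue that any algorithm achieving an approximation within \emph{any} fixed multiplicative factor is automatically a distinguisher. A multiplicative approximation of a quantity equal to $0$ must itself output $0$, whereas on $G_2$ the output must be strictly positive, since it is within a bounded multiplicative factor of a value that is at least $1/2$. Therefore, on input $G_1$ the algorithm reports $0$ and on input $G_2$ it reports a positive number; reading off whether the reported value is $0$ or positive yields an algorithm that distinguishes $G_1$ from $G_2$ with exactly the same success probability and exactly the same number of queries.

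Finally I would invoke Lemma~\ref{lm:hard:G1G2} with $\delta = 1/100$: it states that no algorithm distinguishes $G_1$ and $G_2$ with probability $0.5 + \delta = 0.51$ using fewer than $\delta n - 1 = n/100 - 1$ queries. Combining this with the reduction above, any approximation algorithm that estimates the density of the densest subgraph within any multiplicative factor with probability $0.51$ must make at least $n/100 - 1 \in \Omega(n)$ queries, which is the claim.

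Since the construction is already in hand and the value computations are elementary, there is no substantial obstacle here; the only point that deserves a moment's care is the observation that a multiplicative-factor guarantee cannot ``escape'' the value $0$, so that the zero/positive dichotomy is a genuine distinguisher. That is precisely what lets us transfer the lower bound of Lemma~\ref{lm:hard:G1G2} verbatim, just as in the preceding corollary for average distance.
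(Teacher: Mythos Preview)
Your proposal is correct and follows essentially the same approach as the paper: compute that $\optd = 0$ for $G_1$ and $\optd \geq (n-1)/n \geq 1/2$ for $G_2$, observe that any multiplicative-factor estimator therefore distinguishes the two instances, and invoke Lemma~\ref{lm:hard:G1G2}. The paper's argument is terser but identical in content; your added remark about the zero/positive dichotomy just makes explicit what the paper leaves implicit.
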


Notice that the value of the maximum matching of $G_1$ is $0$ while the value of the maximum matching of $G_2$ is $1$. Therefore any algorithm that estimates the value of the maximum matching within any multiplicative factor distinguishes $G_1$ and $G_2$. This together with Lemma \ref{lm:hard:G1G2} proves the following corollary.

\begin{corollary}
	Any approximation algorithm that estimates the value of the maximum matching in a metric graphs within any multiplicative factor with probability $0.51$ requires $\Omega(n)$ queries.
\end{corollary}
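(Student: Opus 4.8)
The plan is to mirror the three preceding corollaries: reduce the claim to the indistinguishability statement of Lemma~\ref{lm:hard:G1G2} by exhibiting a gap in the maximum matching value between the two hard instances $G_1$ and $G_2$. First I would record the two values. In $G_1$ every edge has weight $0$, so every matching has weight $0$ and the value of the maximum (weight) matching is $0$. In $G_2$ the only edges of nonzero weight are the $n-1$ edges incident to $v_r$, each of weight $1$; since any two of these share the endpoint $v_r$, a matching contains at most one of them, and all remaining edges have weight $0$. Hence the maximum matching value of $G_2$ is exactly $1$.

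Next I would argue that any multiplicative-factor estimator distinguishes $G_1$ from $G_2$: an algorithm that, with probability $0.51$, outputs an estimate within any fixed multiplicative factor of the true value must output $0$ on $G_1$ (the only number within a multiplicative factor of $0$) and a strictly positive number on $G_2$ (any number within a multiplicative factor of $1$ is positive). Thus, reading off whether the reported estimate is zero or positive yields an algorithm that distinguishes $G_1$ from $G_2$ with probability $0.51$, using the same number of queries. Applying Lemma~\ref{lm:hard:G1G2} with $\delta = 0.01$ (so that $0.5 + \delta = 0.51$), distinguishing $G_1$ and $G_2$ with probability $0.51$ requires more than $\delta n - 1 = 0.01\,n - 1$ queries, which is $\Omega(n)$, completing the proof.

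I do not expect a genuine obstacle here; the only points needing a line of care are (i) checking that $G_1$ and $G_2$ are legitimate metric graphs — all nonzero weights equal $1$, and for any triple one verifies $w_{a,b} + w_{b,c} \geq w_{c,a}$ since whenever $w_{c,a} = 1$ one of $c,a$ is $v_r$ and is therefore at weight $1$ from $b$ as well — and (ii) the observation that the maximum matching value of $G_2$ is precisely $1$ (not larger), which is what makes the ``any multiplicative factor'' phrasing applicable. Both are immediate, so the corollary follows from Lemma~\ref{lm:hard:G1G2} exactly as in the average-distance, densest-subgraph, and cut cases.
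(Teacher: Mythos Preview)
Your proposal is correct and follows exactly the paper's approach: observe that the maximum matching value is $0$ in $G_1$ and $1$ in $G_2$, so any multiplicative-factor estimator distinguishes the two instances, and then invoke Lemma~\ref{lm:hard:G1G2}. The paper's proof is in fact terser than yours, omitting the metric-check and the explicit $\delta = 0.01$ instantiation, but the argument is identical.
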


Notice that the value of the maximum cut of $G_1$ is $0$ while the value of the maximum cut of $G_2$ is $n-1$. Therefore any algorithm that estimates the value of the maximum cut distinguishes $G_1$ and $G_2$. This together with Lemma \ref{lm:hard:G1G2} proves the following corollary.

\begin{corollary}
	Any approximation algorithm that estimates the value of the maximum cut in a metric graphs within any multiplicative factor with probability $0.51$ requires $\Omega(n)$ queries.
\end{corollary}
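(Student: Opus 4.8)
The plan is to reduce to Lemma~\ref{lm:hard:G1G2}, exactly in the style of the three corollaries immediately preceding it. First I would verify that $G_1$ and $G_2$ are legitimate (metric, hence $\lambda$-metric) graphs: $G_1$ has all weights $0$, and in $G_2$ the only nonzero weights are the unit weights on the $n-1$ edges incident to $v_r$, so for every triple of points the triangle inequality holds---the only nontrivial case is when exactly one of the three points equals $v_r$, where two incident edges have weight $1$, the remaining edge has weight $0$, and $1+0 \ge 1$.

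Next I would pin down the two max-cut values. In $G_1$ every cut has value $0$, so $\optc(G_1)=0$. In $G_2$, the cut separating $\{v_r\}$ from the remaining $n-1$ vertices cuts every unit-weight edge, so its value is $n-1$; since $n-1$ is also the total weight of all edges of $G_2$, this cut is optimal and $\optc(G_2)=n-1$.

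Then I would observe that any algorithm $\alg$ that with probability at least $0.51$ reports an estimate of the max-cut value that is within some multiplicative factor of the truth also distinguishes $G_1$ from $G_2$ with probability at least $0.51$ using the same queries: a multiplicative estimate of the value $0$ is forced to be $0$, whereas a multiplicative estimate of the value $n-1$ is strictly positive, so one simply declares the input to be $G_1$ precisely when the reported estimate is $0$. Finally I would invoke Lemma~\ref{lm:hard:G1G2} with $\delta=\tfrac{1}{100}$: it is impossible to distinguish $G_1$ and $G_2$ with $\delta n-1$ queries and success probability $0.5+\delta=0.51$, so $\alg$ must use more than $\tfrac{n}{100}-1$ queries, i.e.\ $\Omega(n)$ queries.

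There is no real obstacle here: the corollary is essentially a restatement of Lemma~\ref{lm:hard:G1G2}, and the only point deserving a word of care is applying ``within any multiplicative factor'' to the instance $G_1$ whose optimum is $0$---once one notes that a multiplicative estimate of $0$ must itself equal $0$, the gap between the two instances' max-cut values ($0$ versus $n-1$) makes the distinguishing reduction immediate, and the quantitative lower bound follows verbatim from the lemma.
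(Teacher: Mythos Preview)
Your proposal is correct and follows exactly the same approach as the paper: compute $\optc(G_1)=0$ and $\optc(G_2)=n-1$, observe that any multiplicative estimator must therefore distinguish $G_1$ from $G_2$, and invoke Lemma~\ref{lm:hard:G1G2}. You have simply made explicit a few details (the metric check, the choice $\delta=1/100$, and the handling of a multiplicative estimate of~$0$) that the paper leaves to the reader.
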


\section{Conclusion}
We have show that in metric graphs one can efficiently obtain a linear
sampling with a sublinear number of edge queries, allowing efficient
sparsification that leads to efficient approximation algorithms.  We
believe this technique may be useful in generating approximation
algorithms for other problems beyond those considered here.  Open questions
include possibly improving the lower bounds, or otherwise bridging the gap
between the upper and lower bounds on required queries.

\bibliographystyle{plain}

\newpage
\appendix
\section{Uniform edge sampling  fails to find the densest subgraph}\label{app:dens}
It is known that for general unweighted graphs, if we sample each edge with a small probability $p \in \tilde{\Omega}(\frac{1}{\epsilon^2 n})$, the densest subgraph of the sampled subgraph is a $(1-\epsilon)$-approximation of the densest subgraph of the original graph \cite{esfandiari2015applications}. Here with a simple example we show that this result is not true for weighted graphs in a metric space even when $p$ is a small constant.

Consider a graph $G$ with vertex set $V=\{v_1,v_2,\dots,v_n\}$, where the weight of each each intersecting $v_1$ is $\frac n 2 + 1$ and the weight of each other edge is $1$. The densest subgraph of $G$ contains the whole graph, and its density is $\frac{{n-1 \choose 2}+(n-1)(\frac n 2+1)}{n} = \frac{(n-1)n}{n} = n-1$.

Let $G_p$ be a subgraph of $G$ obtained by sampling each edge with probability $p$. Using a simple Chernoff bound it is easy to show that with high probability $G_p$ has at most $2p{n-1\choose 2}$ edges of weight $1$. Similarly, with high probability the number of edges of weight $\frac n 2+1$ in $G_p$ is between $1$ and $2p(n-1)$. 

Let $H=(V_H,E_H)$ be the densest subgraph of $G_p$. We have
\begin{align*}
	\frac{\sum_{e\in E_H} w_e}{|V_H|} \leq \frac{2p{n-1\choose 2} + 2p(n-1) (n/2 +1)}{|V_H|} = \frac{2pn (n-1)}{|V_H|}.
\end{align*}
On the other hand the density of one single edge with weight $\frac n 2+1$ is $\frac n 4+\frac 1 2$. Thus we have $\frac n 4+\frac 1 2 \leq \frac{2pn (n-1)}{|V_H|}$ which implies $|V_H|\leq 8 p n$. Therefore the density of the densest subgraph induced by $|V_H|$ is at most 
\begin{align*}
\frac{{|V_H|-1 \choose 2} + |V_H| (\frac n 2 +1)}{|V_H|}
\leq \frac {|V_H|-1}{2} + \frac n 2 +1 \leq 4pn - \frac 1 2 +\frac n 2 +1 = (\frac 1 2 + 4p)n+ \frac 1 2.
\end{align*}
Therefore, the subgraph of $G$ induced by $V_H$ is not better than a $\frac{(\frac 1 2 + 4p)n+ \frac 1 2}{n-1}\simeq 0.5+4p$ approximate solution.

\end{document}